\documentclass[runningheads,envcountsame]{llncs}
\usepackage[T1]{fontenc}
\usepackage{graphicx} 
\usepackage{lineno}
\usepackage{url}
\title{Checking History-Determinism is NP-hard \newline for Parity Automata}
\author{Aditya Prakash\inst{1}\orcidlink{0000-0002-2404-0707}}
\authorrunning{Aditya Prakash}
\institute{University of Warwick, UK \newline
\email{aditya.prakash@warwick.ac.uk}}
\usepackage{amsfonts}
\usepackage{amsmath}
\usepackage{bm}
\usepackage{amssymb}
\usepackage[pdftex,dvipsnames]{xcolor}
\usepackage{tikzlings}
\usepackage{orcidlink}
\usepackage{tikzducks}
\usepackage{tikzsymbols}
\usepackage{tikz}
\usepackage{enumitem}
\usepackage{mathtools}
\usepackage{thmtools}
\usepackage{thm-restate}
\usepackage{hyperref}
\usepackage[capitalise]{cleveref}
\usepackage{xspace}
\usepackage[new]{old-arrows}
\usepackage{float}
\usepackage{thm-restate}
\usepackage[bold,full]{complexity}

\newcommand{\hdcontainment}{{\scshape HD-automaton containment}\xspace}

\newcommand{\hdproblem}{{\scshape History-deterministic}\xspace}
\newcommand{\simproblem}{{\scshape Simulation}\xspace}
\newcommand{\twodimparity}{{\scshape 2-D parity game}\xspace}
\newcommand{\specialparity}{{\scshape Good 2-D parity game}\xspace}
\newcommand{\eve}{\pmb{\exists}}
\newcommand{\adam}{\pmb{\forall}}
\newcommand{\Gc}{\mathcal{G}} 
\newcommand{\Lc}{\mathcal{L}}
\newcommand{\Ac}{\mathcal{A}}
\newcommand{\Fc}{\mathcal{F}}
\newcommand{\Hc}{\mathcal{H}}

\newcommand{\Bc}{\mathcal{B}}
\newcommand{\Dc}{\mathcal{D}}

\newcommand{\Nc}{\mathcal{N}}

\newcommand{\Oc}{\mathcal{O}}
\newcommand{\Pc}{\mathcal{P}}

\newcommand{\honest}{\text{uncorrupted}\xspace}
\newcommand{\dishonest}{\text{corrupted}\xspace}
\newcommand{\speshial}{good\xspace}
\newclass{\TOWER}{TOWER}
\newclass{\ACKERMANN}{ACKERMANN}
\newclass{\EXPTIME}{EXPTIME}
\newclass{\IIEXPTIME}{2-EXPTIME}
\usepackage{graphicx}
\usepackage{array}
\usepackage{tikz}
\usetikzlibrary{automata, arrows, positioning, decorations.markings, decorations.pathreplacing}
\usetikzlibrary {decorations.pathmorphing, decorations.shapes}
\usetikzlibrary{backgrounds,fit,shapes.geometric}
\usetikzlibrary {graphs, quotes}
\usetikzlibrary{calc}
\usetikzlibrary{shapes,shapes.geometric,fit}
\usepackage[new]{old-arrows}
\usetikzlibrary {graphs,shapes.geometric,quotes}
\usetikzlibrary{decorations.pathreplacing}
\usetikzlibrary{automata}
\usetikzlibrary{calc}
\usetikzlibrary{arrows,automata,decorations.pathmorphing}
\usetikzlibrary{shapes,shapes.geometric,fit,calc,automata}
\usepackage{xargs}  
\begin{document}
\maketitle
\begin{abstract}
 We show that the problem of checking if a given nondeterministic parity automaton simulates another given nondeterministic parity automaton is NP-hard. We then adapt the techniques used for this result to show that the problem of checking history-determinism for a given parity automaton is NP-hard. This is an improvement from Kuperberg and Skrzypczak's previous lower bound of solving parity games from 2015. We also show that deciding if Eve wins the one-token game or the two-token game of a given parity automaton is NP-hard. Finally, we show that the problem of deciding if the language of a nondeterministic parity automaton is contained in the language of a history-deterministic parity automaton can be solved in quasi-polynomial time.
\end{abstract}

\section{Introduction}
Deciding language inclusion between two automata is a fundamental problem in verification, wherein we ask whether all executions of an implementation satisfy a given specification.  Unfortunately, the problem of checking language inclusion is often computationally hard. For parity automata---which are the focus of this paper---it is $\PSPACE$-complete, with $\PSPACE$-hardness already occurring for finite state automata~\cite{SM73}. 

On the other hand, simulation is a fundamental behavioural relation between two automata~\cite{Mil71,HKR02}, which is a finer relation than language inclusion and is easier to check. For parity automata, simulation can be decided in polynomial time if the parity indices are fixed; otherwise it is in $\NP$~\cite{CHP07}. Note that while simulation between two automata is sufficient to guarantee language inclusion, it is not necessary. 

For history-deterministic automata, however, the relation of language inclusion is equivalent to simulation~\cite{BHLST23,BL23}, thus making them suitable for verification. These are nondeterministic automata where the nondeterminism can be resolved `on-the-fly', just based on the prefix of the word read so far. The definition we use here was introduced by Henzinger and Piterman in 2006, where they dubbed it `good-for-games' automata, while the term `history-determinism' was coined by Colcombet~\cite{Col09} in the context of regular cost automata. 

History-deterministic parity automata are more succinct than their deterministic counterparts~\cite{KS15} whilst still maintaining tractability for the problems of verification and synthesis on them~\cite{HP06,KS15,BL23}. Consequently, history-deterministic parity automata have been the subject of extensive research~\cite{KS15,BKS17,BK18,Sch20,RK22,Kup22}, and has garnered significant attention over the recent years beyond parity automata as well, extending to quantitative automata~\cite{BL21,BL22}, infinite state systems~\cite{GJLZ21,LZ22,PT23,BPT23,EGJLZ23},  and timed automata~\cite{BHLST23}. 

Despite these recent research efforts, a significant gap remains in understanding the complexity of checking whether a given parity automaton is history-deterministic. While Henzinger and Piterman have shown an $\EXPTIME$ upper bound~\cite{HP06}, the best lower bound known so far is by Kuperberg and Skrzypczak since 2015~\cite{KS15}, who showed that checking for history-determinism is at least as hard as finding the winner of a parity game~\cite{KS15}---a problem that can be solved in quasi-polynomial time and is in $\NP \cap \coNP$ (and even in $\UP \cap \coUP$~\cite{Jur98}). 

Kuperberg and Skrzypczak also gave a polynomial-time algorithm to check for history-determinism of co-B\"uchi automata in their work~\cite{KS15}. 
This was followed by a polynomial time algorithm to check for history-determinism of B\"uchi automata in 2018 by Bagnol and Kuperberg~\cite{BK18}, 
who showed that in order to check if a B\"uchi automaton is history-deterministic, it suffices to find the winner of the so-called `two-token game' of the automaton. 
This connection between history-determinism and two-token games was extended in 2020 to co-B\"uchi automata by Boker, Kuperberg, Lehtinen, and Skrzypczak~\cite{BKLS20b}. 
It is conjectured that the winner of the two-token game of a parity automaton characterises its history-determinism. While the two-token conjecture is open to date, showing this conjecture would imply that one can check history-determinism of a given parity automata with a fixed parity index in polynomial time.

\subsubsection*{Our contributions.}
We show that checking for simulation between two parity automata is $\NP$-hard when the parity index is not fixed. Since simulation is known to be in $\NP$, this establishes the problem to be $\NP$-complete (\cref{thm:Sim-is-NP-hard}). 

An adaptation of our proof of \cref{thm:Sim-is-NP-hard} gives us that checking history-determinism for a parity automata is also $\NP$-hard (\cref{thm:HD-is-NP-hard}), when the parity index is not fixed. This is an improvement on Kuperberg and Skrzypczak's result from 2015, which shows that checking history-determinism for parity automata is at least as hard as solving parity games~\cite{KS15}. We also show, using the same reduction, that checking whether Eve wins the 2-token game (of a given parity automaton) is $\NP$-hard, while checking whether Eve wins the 1-token game is $\NP$-complete (\cref{thm:HD-is-NP-hard}).

As remarked earlier, for history-deterministic parity automata, the relation of language inclusion is equivalent to simulation. This gives us an immediate $\NP$ upper bound for checking language inclusion of a nondeterministic parity automaton in an HD-parity automata, as was observed by Schewe~\cite{Sch20}. We show that we can do better, by showing the problem to be decidable in quasi-polynomial time (\cref{thm:inclusion-theorem}). 

\subsubsection*{Overview of the paper: one reduction for all.}
The central problem used in our reduction is of checking whether Eve wins a 2-D parity game, which is known to be $\NP$-complete due to Chatterjee, Henzinger and Piterman~\cite{CHP07}. In \cref{sec:NP-hardnessforsimulation}, we give a reduction from this problem to checking for simulation between two parity automata, thus establishing its $\NP$-hardness (\cref{thm:Sim-is-NP-hard}). We then show, in \cref{sec:good-games}, that the problem of checking whether Eve wins a \speshial 2-D parity games---a technical subclass of 2-D parity games---is also $\NP$-hard. In \cref{sec:good-to-hd}, we show that modifying the reduction in proof of  \cref{thm:Sim-is-NP-hard} to take as inputs \speshial 2-D parity games yields $\NP$-hardness for the problems of checking history-determinism (\cref{lem:hd-is-np-hard}) and of checking if Eve wins the 1-token game or the 2-token game (\cref{thm:HD-is-NP-hard}). Finally, in \cref{sec:lang-inclusion}, we give a quasi-polynomial algorithm to check whether the language of a nondeterministic parity automaton is contained in the language of a history-deterministic parity automaton (\cref{thm:inclusion-theorem}), by reducing to finding the winner in a parity game. 

\section{Preliminaries}
We let $\mathbb{N}=\{0,1,2,\cdots \}$ to be the set of natural numbers, and $\omega$ to be the cardinality of $\mathbb{N}$. We will use $[i,j]$ to denote the set of integers in the interval $\{i,i+1,\ldots,j\}$ for two natural numbers $i,j$ with $i<j$, and $[j]$ for the interval $[0,j]$. An \textit{alphabet} $\Sigma$ is a finite set of \textit{letters}. We use $\Sigma^{*}$ and $\Sigma^{\omega}$ to denote the set of words with finite and $\omega$ length over $\Sigma$ respectively. We also let $\varepsilon$ denote the unique word of length 0.

\subsection{Parity conditions}
Let $G = (V,E)$ be a (finite or infinite) directed graph equipped with a \emph{priority function} $\chi:E \xrightarrow{} \mathbb{N}$ that assigns each edge with a natural number, called its \emph{priority}. We say that an infinite path $\rho$ in $G$ satisfies the \emph{$\chi$-parity condition} if the highest priority occurring infinitely often in the path is even. When clear from the context, we will drop `parity condition' and instead  say that $\rho$ satisfies $\chi$. 

A parity condition is easily \emph{dualised}. Given a priority function $\chi$ as above, consider the priority function $\chi' := \chi +1$ that is obtained by increasing all the labels by 1. Then, an infinite path satisfies $\chi'$ if and only if it does not satisfy $\chi$.

\subsection{Parity automata}
A \emph{nondeterministic parity automaton} $\Ac = (Q,\Sigma,q_0,\Delta,\Omega)$ contains a finite directed graph with edges labelled by letters in $\Sigma$. These edges are called \emph{transitions}, which are elements of the set $\Delta \subseteq Q \times \Sigma \times Q$, and the vertices of this graph are called \emph{states},  which are elements of the set $Q$. 

Each automaton has a designated \emph{initial state} $q_0 \in Q$, and a priority function $\Omega: \Delta \xrightarrow{} [i,j]$ which assigns each transition a \emph{priority} in $[i,j]$, for $i<j$ two natural numbers. For states $p,q$ and an alphabet $a \in \Sigma$, we use $p\xrightarrow{a:c}q$ to denote a transition from $p$ to $q$ on the letter $a$ that has the priority $c$. 

A \emph{run} on an infinite word $w$ in $\Sigma^{\omega}$ is an infinite path in the automaton, starting at the initial state and following transitions that correspond to the letters of $w$ in sequence.   We say that such a run is \emph{accepting} if it satisfies the $\Omega$-parity condition, and a word $w$ in $\Sigma^{\omega}$ is accepting if the automaton has an accepting run on $w$. The \emph{language} of an automaton $\Ac$, denoted by $L(\Ac)$, is the set of words that it accepts. We say that the automaton $\Ac$ \emph{recognises} a language $\Lc$ if $L(\Ac)=\Lc$. A parity automaton $\Ac$ is said to be \emph{deterministic} if for any given state in $\Ac$ and any given letter in $\Sigma$, there is at most one transition from the given state on the given letter. 

If $\Ac$'s priorities are in $[i,j]$, we say that $(j-i+1)$ is the \emph{number of priorities} of $\Ac$. Since decreasing (or increasing) all of these priorities in the automaton by 2 does not change the acceptance of a run--- and hence a word---in the automaton, we will often assume $i$ to be $0$ or $1$. With this assumption, the interval $[i,j]$ is then said to be the \emph{parity index} of $\Ac$. A \emph{B\"uchi (resp. co-B\"uchi)} automaton is a parity automaton whose parity index is $[1,2]$ (resp. $[0,1]$).

\begin{remark}
    We note that we allow an automatonto be \emph{incomplete}, i.e. there might be letter and state pairs in an automaton such that there are no transitions on that letter from that state.
\end{remark}
\subsection{Game arenas}
An \emph{arena} is a directed graph $G=(V,E)$ with vertices partitioned as $V_{\adam}$ and $V_{\eve}$ between two players Adam and Eve respectively. Additionally, a vertex $v_0 \in V_{\adam}$ is designated as the initial vertex. We say that the set of vertices $V_{\eve}$ is owned by Eve while the set of vertices $V_{\adam}$ is owned by Adam. Additionally, we assume that the edges $E$ don't have both its start and end vertex in $V_{\eve}$ or $V_{\adam}$.

Given an arena as above, a \emph{play} of this arena is an infinite path starting at $v_0$, and is formed as follows. A play starts with a token at the start vertex $v_0$, and proceeds for countably infinite rounds. At each round, the player who owns the vertex on which the token is currently placed chooses an outgoing edge, and the token is moved along this edge to the next vertex for another round of play. This creates an infinite path in the arena, which we call a play of $G$. 

A \emph{game} $\Gc$ consists of an arena $G=(V,E)$ and a winning condition given by a language $L \subseteq E^{\omega}$. We say that Eve \emph{wins a play} $\rho$ in $G$ if $\rho$ is in $L$, and Adam wins otherwise. A \emph{strategy} for Eve in such a game $\Gc$ is a function from the set of plays that end at an Eve's vertex to an outgoing edge from that vertex. Such an Eve strategy is said to be a \emph{winning strategy} for Eve if any play that can be produced when she plays according to her strategy is winning for Eve. We say that Eve \emph{wins the game} if she has a winning strategy. Winning strategies are defined for Adam analogously, and we say that Adam wins the game if he has a winning strategy.  

In this paper we will deal with \emph{$\omega$-regular games}. These are games where the languages specifying the winning condition are  recognised by a parity automata. Such games are known to be determined~\cite{Mar75,GH82}, i.e. each game has a winner. Two games are \emph{equivalent} if they have the same winner.

\subsection{Parity games}
A \emph{parity game} $\Gc$ is played over a finite game arena $G = (V,E)$, with the edges of $G$ labelled by a priority function $\chi:E \xrightarrow{} \{0,1,2,\cdots,d\}$. A play $\rho$ in the arena of $\Gc$ is winning for Eve if and only if $\rho$ satisfies the $\chi$-parity condition. 

\subsection{Muller conditions and Zielonka trees}
A $(C,\Fc)$-Muller conditions consists of a finite set of colours $C$, and a set $\Fc$ consisting of subsets of $C$. An infinite sequence in $C^{\omega}$ satisfies the $(C,\Fc)$-Muller condition if the set of colours seen infinitely often along the sequence is in $\Fc$.

A \emph{Muller game} $\Gc$ consists of an arena $G = (V,E)$, a colouring function $\pi: E \xrightarrow{} C$ and a Muller condition $(C,\Fc)$. An infinite play $\rho$  in $\Gc$ is winning for Eve if the set of colours seen infinitely often along the play is in $\Fc$, and Eve wins the Muller game $\Gc$ if she has a winning strategy.

Every Muller game can be converted to an equivalent parity game, as shown by Gurevich and Harrington~\cite{GH82}. We will use the conversion of Dziembowski, Jurdzi\'nski, and Walukiewikz that involve Zielonka trees~\cite{DJW97,CCF21}, which we define below. 
\begin{definition}[Zielonka tree]\label{def:Zielonkatree}
    Given a Muller condition $(C,\Fc)$, the \emph{Zielonka tree} of a Muller condition, denoted $Z_{C,\Fc}$, is a tree whose nodes are labelled by subsets of $C$, and is defined inductively. The root of the tree is labelled by $C$. For a node that is already constructed and labelled with the set $X$, its children are nodes labelled by distinct maximal non-empty subsets $X' \subsetneq X$ such that $X \in \Fc \Leftrightarrow X' \notin \Fc$.  If there are no such $X'$, then the node labelled $X$ is a leaf of $Z_{C,\Fc}$ and has no attached children.
\end{definition}

Given a $(C,\Fc)$-Muller condition, consider the language $L \subseteq C^{\omega}$ consisting of words $w$ that satisfy the $(C,\Fc)$-Muller condition. The language $L$ is then said to be the language of the $(C,\Fc)$-Muller condition, and can be recognised by a deterministic parity automaton, whose size depends on the size of the Zielonka tree~\cite{CCF21}. 
\begin{lemma}[\cite{CCF21}]\label{lem:zielonka-automata}
    Let $(C,\Fc)$ be a Muller condition with the Zielonka tree $Z_{C,\Fc}$ that has $n$ leaves and height $h$. Then there is a deterministic parity automaton $\Dc_{C,\Fc}$ that can be constructed in polynomial time such that $\Dc_{C,\Fc}$ has $n$ states and  $(h+1)$ priorities, and accepts the language of the the $(C,\Fc)$-Muller condition.  
\end{lemma}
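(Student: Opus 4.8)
The plan is to build the standard \emph{Zielonka-tree automaton} of $(C,\Fc)$ and verify that it recognises the language of the Muller condition. I would take the states of $\Dc_{C,\Fc}$ to be exactly the leaves of $Z_{C,\Fc}$, so there are $n$ of them, fix the leftmost leaf as the initial state, fix once and for all a left-to-right order on the children of every node, and write $\lambda(v)$ for the label of a node $v$. The transition on a colour $c$ from a leaf $\ell$ is given by a walk in $Z_{C,\Fc}$: let $v$ be the deepest node on the root-to-$\ell$ path with $c\in\lambda(v)$ (this exists since $\lambda(\mathrm{root})=C$); if $v=\ell$ the new state is $\ell$, and otherwise it is the leftmost leaf below the child of $v$ that cyclically follows the child of $v$ lying above $\ell$. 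The priority emitted is $p(v):=h-\mathrm{depth}(v)$, globally shifted by $1$ if necessary so that $p(v)$ is even exactly when $\lambda(v)\in\Fc$; this is consistent because membership in $\Fc$ flips along every edge of the Zielonka tree, so it depends only on the parity of the depth. Then $p$ is strictly decreasing in depth and takes values in an interval of $h+1$ integers, so $\Dc_{C,\Fc}$ has $n$ states and $h+1$ priorities and, since each transition is a single root-to-leaf walk, is computable in polynomial time from $Z_{C,\Fc}$.

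For correctness I would fix a word $w\in C^{\omega}$, let $X$ be its (non-empty) set of infinitely recurring colours, and analyse the sequence $v_1,v_2,\dots$ of nodes used by the run of $\Dc_{C,\Fc}$ on $w$. Let $k$ be the least depth of a node recurring infinitely often in this sequence, and choose $T$ past which no node of depth $<k$ appears. The first key step is a stabilisation argument: since $v_{i+1}$ always lies in a child-subtree of $v_i$, the depth-$k$ ancestor of the current state cannot change after time $T$; hence there is a single node $t$ at depth $k$ whose subtree the token never leaves after time $T$, and $t$ itself recurs infinitely often (otherwise $k$ would not be the minimal recurring depth). The second step reads off two containments. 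Reading a colour outside $\lambda(t)$ while inside $t$'s subtree would force the next $v_i$ strictly above $t$, impossible for $i\ge T$, so $X\subseteq\lambda(t)$. And since each occurrence $v_i=t$ cyclically advances the child of $t$ holding the token, every child-subtree of $t$ is entered infinitely often; if some child $c$ had $X\subseteq\lambda(c)$, then once the token enters $c$'s subtree beyond the point where only colours of $X$ are read it could never leave, so $t$ would stop recurring---a contradiction. Hence $X\not\subseteq\lambda(c)$ for every child $c$ of $t$.

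Finally I would feed these two facts into the definition of the Zielonka tree: if $X$ had the opposite $\Fc$-status to $\lambda(t)$ it would be a non-empty proper subset of $\lambda(t)$ with the ``flipped'' membership, hence contained in one of the maximal such subsets---a child of $t$---contradicting the second step; so $X\in\Fc\iff\lambda(t)\in\Fc$. Therefore $w$ satisfies the $(C,\Fc)$-Muller condition iff $\lambda(t)\in\Fc$ iff $p(t)$ is even, and since $p$ is strictly decreasing in depth and $t$ is the least-depth node recurring in $(v_i)$, the value $p(t)$ is exactly the highest priority seen infinitely often along the run; hence the run is accepting iff $w$ lies in the language of $(C,\Fc)$, which is what we wanted. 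I expect the main obstacle to be the stabilisation argument of the second paragraph---in particular, pinning down that the depth-$k$ ancestor of the token is eventually constant and that the limiting node $t$ is itself visited infinitely often, so that both containments are available---while the reduction to the defining property of the Zielonka tree and the priority bookkeeping are routine.
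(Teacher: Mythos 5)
The paper does not actually prove this lemma---it is imported by citation from \cite{CCF21}---and your construction is precisely the Zielonka-tree automaton of that reference (states are the leaves, the ``support'' is the deepest ancestor whose label contains the colour read, a cyclic advance among the support's children, and depth-based priorities aligned with $\Fc$-membership), together with the standard correctness argument via the least-depth infinitely recurring support; this is correct and is exactly the intended proof. One sentence is stated imprecisely: it is not true that $v_{i+1}$ always lies in a child-subtree of $v_i$, since a later support can jump back up the tree; what you need, and what your choice of $T$ in fact gives you, is that the new \emph{state} $\ell_{i+1}$ lies below $v_i$ and that every support after time $T$ has depth at least $k$, so each such support is a descendant of the depth-$k$ ancestor of the current state, which is therefore eventually constant.
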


Consider a Muller game $\Gc$ on the arena $G = (V,E)$ with the colouring function $\pi:E \xrightarrow{} C$ and the Muller condition $(C,\Fc)$. We can then construct an equivalent parity game $\Gc'$ by taking the product of $\Gc$ with the automaton $\Dc_{C,\Fc}$ from \cref{lem:zielonka-automata}. In more details, the set of vertices $V'$ of $\Gc'$ consists of vertices of the form $v'=(v,q)$, where $v$ is a vertex in $\Gc$ and $q$ is a state in $\Dc_{C,\Fc}$. The owner of the vertex $(v,q)$ is the owner of the vertex $v$, and the initial vertex is $(\iota,q_0)$, where $\iota$ is the initial vertex in $\Gc$ and $q_0$ is the initial state in $\Dc_{C,\Fc}$. We have the edge $e' = (v,q) \xrightarrow{} (v',q')$ in $\Gc'$ if $e = v \xrightarrow{} v'$ is an edge in $\Gc$ with the colour $\pi(e) = c$, and $\delta = q \xrightarrow{c} q'$ is a transition in $\Dc_{C,\Fc}$. The edge $e'$ is assigned the priority $\Omega(\delta)$ in $\Gc'$, where $\Omega$ is the priority function of the automaton $\Dc_{C,\Fc}$. The game $\Gc'$ then is such that Eve wins $\Gc$ if and only if Eve wins $\Gc'$.

\begin{lemma}\label{lem:Muller to Parity reduction}
    Let $\Gc$ be a Muller game on an arena consisting of $m$ vertices with a Muller condition $(C,\Fc)$ whose Zielonka tree $Z_{C,\Fc}$ has $n$ leaves and height $h$. Then, $\Gc$ can be converted to an equivalent parity game $\Gc'$ which has  $mn$ many vertices and $h+1$ priorities. 
\end{lemma}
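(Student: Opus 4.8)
The plan is to obtain $\Gc'$ as the synchronised product of the Muller game $\Gc$ with the deterministic parity automaton $\Dc_{C,\Fc}$ supplied by \cref{lem:zielonka-automata}, exactly as sketched in the paragraph preceding the statement. First I would record the construction precisely: the vertex set is $V' = V \times Q$, where $Q$ is the state set of $\Dc_{C,\Fc}$; ownership of $(v,q)$ is inherited from $v$; the initial vertex is $(\iota,q_0)$; and there is an edge $(v,q) \xrightarrow{} (v',q')$ whenever $v \xrightarrow{} v'$ is an edge of $\Gc$ of colour $c$ and $q \xrightarrow{c} q'$ is the (unique, since $\Dc_{C,\Fc}$ is deterministic) transition of $\Dc_{C,\Fc}$, with that edge inheriting the priority $\Omega(\delta)$ of the automaton transition. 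By \cref{lem:zielonka-automata} we have $|Q| = n$ and the priorities of $\Dc_{C,\Fc}$ lie in a set of size $h+1$, so $\Gc'$ has $mn$ vertices and $h+1$ priorities, which settles the quantitative part of the claim; the only thing left is the equivalence of winners.

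For the equivalence, the key observation is that $\Gc$ and $\Gc'$ have "the same plays" modulo the automaton annotation. Because $\Dc_{C,\Fc}$ is deterministic and complete (it is a deterministic parity automaton recognising a language over all of $C^\omega$), every play $\rho = v_0 v_1 \cdots$ of $\Gc$ with colour sequence $c_0 c_1 \cdots$ has a unique lift to a play $\rho' = (v_0,q_0)(v_1,q_1)\cdots$ of $\Gc'$, namely the one following the run of $\Dc_{C,\Fc}$ on $c_0 c_1 \cdots$; conversely every play of $\Gc'$ projects to a play of $\Gc$. This lift is a bijection between plays, it respects ownership, and it respects the partition of vertices into the two players' choices, so strategies transfer verbatim in both directions. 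Next I would check that the lift preserves the winner of each play: a play $\rho$ of $\Gc$ is won by Eve iff the set of colours seen infinitely often lies in $\Fc$, iff $c_0 c_1 \cdots$ is in the language of the $(C,\Fc)$-Muller condition, iff the run of $\Dc_{C,\Fc}$ on $c_0 c_1 \cdots$ satisfies its parity condition (by correctness of $\Dc_{C,\Fc}$, \cref{lem:zielonka-automata}), iff the highest priority seen infinitely often along $\rho'$ is even, iff $\rho'$ is won by Eve in $\Gc'$. Combining the strategy transfer with this per-play correspondence gives: Eve wins $\Gc$ iff Eve wins $\Gc'$.

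Concretely I would phrase the strategy transfer as two implications. If $\sigma$ is a winning strategy for Eve in $\Gc$, define $\sigma'$ on a partial play of $\Gc'$ by projecting to $\Gc$, applying $\sigma$ to get the next $\Gc$-edge, and taking its unique lift in $\Gc'$ (determinism of $\Dc_{C,\Fc}$ makes the continuation unique); any $\sigma'$-play projects to a $\sigma$-play, which is winning for Eve in $\Gc$, hence the $\sigma'$-play is winning for Eve in $\Gc'$ by the per-play correspondence. The reverse direction is symmetric, using that the projection of a $\Gc'$-play is a genuine $\Gc$-play and that an Eve strategy in $\Gc'$ can only depend on data that is a refinement of the $\Gc$-history. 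I do not expect a genuine obstacle here — the argument is a routine product-automaton/lift bookkeeping — but the one point deserving care is that $\Dc_{C,\Fc}$ is deterministic \emph{and} complete, so that every colour sequence has exactly one run; this is what makes the play-lift well defined and bijective, and hence what makes strategies transfer without the players gaining or losing any power. Once that is nailed down, the bounds $mn$ and $h+1$ follow immediately from \cref{lem:zielonka-automata}.
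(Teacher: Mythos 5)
Your proposal is correct and follows exactly the paper's approach: the paper proves this lemma via the product construction with $\Dc_{C,\Fc}$ described in the paragraph preceding the statement, deriving the bounds $mn$ and $h+1$ from \cref{lem:zielonka-automata} and the winner-equivalence from the deterministic lift of plays. Your write-up is in fact more explicit than the paper's (which leaves the strategy transfer as an assertion), and your emphasis on determinism and completeness of $\Dc_{C,\Fc}$ is the right point to single out.
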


\subsection{2-dimensional parity game}\label{def:twodimparitygame}
Multi-dimensional parity games were introduced by Chatterjee, Henzinger and Piterman, where they called it generalised parity games~\cite{CHP07}. For our purposes, it suffices to consider 2-dimensional (2-D) parity games, which is what we define now. 

A \emph{2-dimensional parity game} $\Gc$ is similar to a parity game, but we now have two priority functions $\pi_1:E \xrightarrow{} [0,d_1]$ and $\pi_2:E \xrightarrow{} [0,d_2]$ on $E$. Any infinite play in the game is winning for Eve if the following holds:
\emph{if the play satisfies $\pi_1$, then it satisfies $\pi_2$}.\\ 
We say that Adam wins the game otherwise. We call the problem of deciding whether Eve wins a 2-D parity game as {\scshape 2-D parity game}. 
\begin{quote}
    {\scshape 2-D parity game}: Given a 2-D parity game $\Gc$, does Eve win $\Gc$?
\end{quote}
If Eve has a strategy to win a 2-D parity game, then Eve has a positional winning strategy to do so, i.e. she can win by always choosing the same edge from each vertex in $V_{\eve}$, which is given by a function $\sigma:V_{\eve} \xrightarrow{} E$. This can be inferred directly from seeing the 2-D parity game as a Rabin game, which are known to have positional strategies for Eve~\cite{Emerson85}. Furthermore, given a positional strategy $\sigma$ for Eve in a 2-D parity game (or a Rabin game), one can check in polynomial time if $\sigma$ is a winning strategy~\cite{Emerson85}. This gives us a nondeterministic polynomial time procedure to decide if Eve wins a given 2-D parity game. In 1988, Emerson and Jutla established $\NP$-hardness for Rabin games~\cite{EJ88,EJ99}. This was later extended by Chatterjee, Henzinger, and Piterman in 2007 to show $\NP$-hardness for 2-D parity games as well~\cite{CHP07}.

\begin{theorem}[\cite{CHP07}]    
   The problem of deciding whether Eve wins a given 2-D parity game is $\NP$-complete.
\end{theorem}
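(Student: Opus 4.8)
The plan is to establish membership in $\NP$ and $\NP$-hardness separately; the membership direction is essentially already spelled out in the discussion above, so the real work lies in the reduction. For membership, observe that Eve's winning condition ``if $\rho$ satisfies $\pi_1$ then $\rho$ satisfies $\pi_2$'' is equivalent to ``$\rho$ satisfies $\pi_1+1$ or $\rho$ satisfies $\pi_2$'', using that $\rho\not\models\pi_1$ iff $\rho\models(\pi_1+1)$; since the complement of a parity condition is a parity condition, a parity condition is in particular a Rabin condition, and a disjunction of Rabin conditions is a Rabin condition, Eve's condition is a Rabin condition. As recalled above, Eve then has positional winning strategies whenever she wins, and a positional strategy $\sigma:V_{\eve}\to E$ can be checked for being winning in polynomial time~\cite{Emerson85}; guessing $\sigma$ and verifying it puts the problem in $\NP$.

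For $\NP$-hardness I would reduce from \textsc{sat}, following the Emerson--Jutla reduction for Rabin games~\cite{EJ99} as adapted to 2-D parity games in~\cite{CHP07}. Given a CNF formula $\varphi = C_1 \wedge \cdots \wedge C_m$ over variables $x_1,\dots,x_n$, build an arena with an Adam-owned hub $h$; from $h$, Adam moves to a vertex $u_j$ for a clause $C_j$ of his choosing; $u_j$ is owned by Eve, and from it she moves to a vertex $w_\ell$ for some literal $\ell$ occurring in $C_j$; from $w_\ell$ the play returns to $h$, possibly through a small gadget letting Adam single out one variable whose consistency he wants to contest. A positional strategy of Eve is then exactly a choice of one literal per clause, and along any play consistent with it the literal-vertices seen infinitely often form a subset of the chosen literals. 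I would then design $\pi_1$ and $\pi_2$ so that on every such play the implication $\pi_1 \Rightarrow \pi_2$ holds iff that set of literal-vertices is consistent, i.e.\ contains at most one literal of each variable: the contesting gadget lets Adam, whenever Eve's choice is inconsistent, force the infinitely-often literal-vertices to be exactly the two complementary literals of a single variable, and the priorities are arranged so that precisely this kind of play satisfies $\pi_1$ but not $\pi_2$. Consequently Eve wins the constructed game iff she can choose one true literal per clause consistently, i.e.\ iff $\varphi$ is satisfiable; the arena and the priority functions are polynomial in $|\varphi|$ and computable in polynomial time.

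The main obstacle is the correctness of the priority design. A parity condition only detects the largest priority occurring infinitely often, which is a global quantity, so expressing the \emph{universally} quantified statement ``every variable is used consistently'' genuinely needs both dimensions --- together with either a linear number of priorities or the contesting gadget that localises an alleged inconsistency to one variable. The delicate part is verifying that a consistent literal choice beats \emph{every} Adam play, including plays where Adam revisits clauses in an adversarial order or declines to contest any variable, while any inconsistent choice is always punished; this verification is exactly the content of the Emerson--Jutla argument and its adaptation in~\cite{EJ99,CHP07}.
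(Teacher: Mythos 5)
Your $\NP$-membership argument is exactly the paper's: rewrite Eve's condition as $(\pi_1+1)\vee\pi_2$, observe this is a Rabin condition, invoke positionality of Eve's winning strategies in Rabin games and polynomial-time verification of a positional strategy~\cite{Emerson85}, and guess-and-check. Nothing to add there.

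The hardness direction, however, has a genuine gap: you describe an arena (hub, clause vertices, literal vertices, a ``contesting gadget'') and then explicitly defer the one step that carries all the difficulty --- the design of $\pi_1$ and $\pi_2$ and the verification that they punish exactly the inconsistent positional strategies --- to the cited papers. That priority design is the entire content of the reduction, and without it the argument is not a proof. The paper in fact spells out a complete and simpler version of this reduction in \cref{lem:special-parity-is-np-hard} (there phrased for \speshial 2-D parity games, a subclass, so it a fortiori proves hardness here): Adam's vertices are the literals, Eve's are the clauses, all edges leaving Adam's vertices get priorities $(0,0)$, and the edge by which Eve selects literal $x_j$ gets priorities $(2j{+}2,\,2j)$ while the edge selecting $\neg x_j$ gets $(2j{+}1,\,2j{+}1)$. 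No hub and no contesting gadget are needed: the universally quantified statement ``every variable is used consistently'' is handled automatically by the max-semantics of the parity condition, since only the highest-indexed variable whose literals are chosen infinitely often determines both limsup priorities. The three-way case analysis is then immediate --- if only $x_i$ recurs the $\pi_2$-limsup is $2i$ (even, Eve wins); if only $\neg x_i$ recurs the $\pi_1$-limsup is $2i{+}1$ (odd, the implication holds vacuously); and if Eve's positional strategy is inconsistent on $x_i$, Adam alternates between the two offending clauses, making the $\pi_1$-limsup $2i{+}2$ and the $\pi_2$-limsup $2i{+}1$, so he wins. You correctly identified that a linear number of priorities is the mechanism that makes this work, but the interleaved assignment above is the missing idea your proposal needs to be self-contained.
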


\begin{remark}
Chatterjee, Henzinger and Piterman give a slightly different and a more natural definition of 2-D parity games~\cite{CHP07}, where the winning condition for Eve requires every play to satisfy either of two given parity conditions. It is easy to see, however, that both definitions are log-space inter-reducible to each other, by dualising the first parity condition. Our definition, although less natural, makes the connection to simulation games and our reductions in \cref{sec:NP-hardnessforHD,sec:NP-hardnessforsimulation} more transparent. 
\end{remark}

\subsection{Simulation}\label{subsec:Prelims-simulation}
We say a parity automaton $\Ac$ \emph{simulates} another parity automaton $\Bc$ if for any (finite or infinite) run on $\Bc$, there is a corresponding run on $\Ac$ on the same word that can be constructed on-the-fly such that if the run in $\Bc$ is accepting, so is the corresponding run in $\Ac$. This is made more formal by the following \emph{simulation game}.

\begin{definition}[Simulation game]
    Given nondeterministic parity automata $\Ac = (Q,\Sigma,q_0,\Delta_A,\Omega_A)$ and $\Bc = (P,\Sigma,p_0,\Delta_B, \Omega_B)$, the \emph{simulation game} between $\Ac$ and $\Bc$, denoted $Sim(\Ac,\Bc)$, is defined as a two player game between Adam and Eve as follows, with positions in $P \times Q$. A play of the simulation game starts at the position $(p_0,q_0)$, and has $\omega$ many rounds. For each $i \in \mathbb{N}$, the $(i+1)^{th}$ round starts at a position $(p_i,q_i) \in P \times Q$, and proceeds as follows:
    \begin{itemize}
        \item Adam selects a letter $a \in \Sigma$, and a transition $p_i \xrightarrow{a} p_{i+1}$ in $\Bc$.
        \item Eve selects a transition $q_i \xrightarrow{a} q_{i+1}$ on the same letter in $\Ac$.
    \end{itemize}
   The new position is $(p_{i+1},q_{i+1})$, for another round of the play.

   The player Eve wins the above play if either her constructed run in $\Ac$ is accepting, or Adam's constructed run in $\Bc$ is rejecting. If Eve has a winning strategy in $Sim(\Ac,\Bc)$, then we say that $\Ac$ \emph{simulates} $\Bc$, and denote it by $\Bc \lesssim \Ac$.
\end{definition}
We call the problem of checking whether a parity automaton simulates another as {\scshape Simulation}:
\begin{quote}
    {\scshape Simulation:} Given two parity automata $\Ac$ and $\Bc$, does $\Ac$ simulate $\Bc$?
\end{quote}
The simulation game $Sim(\Ac,\Bc)$ can naturally be seen as a 2-D parity game, where the arena is the product of two automata with Adam selecting letters and transitions in $\Ac$ and Eve transitions in $\Bc$, and the priority functions $\chi_1$ and $\chi_2$ based on corresponding priorities of transitions in $\Ac$ and $\Bc$ respectively. Since \twodimparity can be solved in $\NP$, \simproblem can be solved in $\NP$ as well.

\subsection{History-determinism}

A \emph{history-deterministic} (HD) parity automaton is a nondeterministic parity automaton in which the nondeterminism can be resolved `on-the-fly' just based on the prefix read so far, without knowing the rest of the word. The history-determinism of a parity automaton can be characterised by the letter game, which is a 2-player turn-based game between Adam and Eve, who take alternating turns to select a letter and a transition in the automaton (on that letter), respectively. After the game ends, the sequence of Adam's choices of letters is an infinite word, and the sequence of Eve's choices of transitions is a run on that word. Eve wins the game if her run is accepting or Adam's word is rejecting, and we say that an automaton is history-deterministic if Eve has a winning strategy in the history-determinism game.

\begin{definition}[Letter game]
Given a parity automaton $\Ac = (Q,\Sigma, q_0, \Delta,\Omega)$, the \emph{letter game} of $\Ac$ is defined between the two players Adam and Eve as follows, with positions in $Q \times \Sigma^{*}$. The game starts at $(q_0,\varepsilon)$ and proceeds in $\omega$ many rounds. For each $i \in \mathbb{N}$, the $(i+1)^{th}$ round starts at a position $(q_i,w_i) \in Q \times \Sigma^{i}$, and proceeds as follows:\begin{itemize}
    \item Adam selects a letter $a_i \in \Sigma$
    \item Eve selects a transition $q_i \xrightarrow{a_i} q_{i+1} \in \Delta$
\end{itemize}
The new position is $(q_{i+1},w_{i+1})$, where $w_{i+1} = w_i a_i$. 

Thus, the play of a letter game can be seen as Adam constructing a word letter-by-letter, and Eve constructing a run transition-by-transition on the same word. Eve wins such a play if the following holds: if Adam's word  is in $L(\Ac)$, then Eve's run is accepting. 
\end{definition}
We note that the letter game is an $\omega$-regular game: the set of winning plays $\Pc$ for Eve are sequences of alternating letters and transitions, so that the word formed by just the letters is accepting in $\Ac$, while the run formed by just the transitions is rejecting. Since parity automata can be determinised, it is clear that $\Pc$ is an $\omega$-regular language, hence the letter game is an $\omega$-regular game, and therefore the letter game is determined~\cite{Mar75,GH82}.  

If Eve has a winning strategy on the letter game of $\Ac$, then $\Ac$ is said to be \emph{history-deterministic}. 
We are interested in the problem of checking whether a given parity automaton is history-deterministic, which we shall denote by \hdproblem.
\begin{quote}
    \hdproblem: Given a parity automaton $\Ac$, is $\Ac$ history-deterministic?
\end{quote}

\subsection{Token games}\label{sec:token-games}
Token games, or $k$-token games are defined on an automaton and are similar to letter games. Similar to as in a letter game, Adam constructs a word letter-by-letter and Eve constructs a run transition-by-transition on the same word over $\omega$ many rounds. But additionally, Adam also constructs $k$ runs transition-by-transition on that word. The winning objective of Eve requires her to construct an accepting run if one of $k$ Adam's runs is accepting.

\begin{definition}[$k$-token game]
    Given a nondeterministic parity automaton $\Ac = (Q,\Sigma,, q_0, \Delta,\Omega)$, the \emph{$k$-token game} of $\Ac$ is defined between the two players Adam and Eve as follows, with positions in $Q \times Q^k$. The game starts at $(q_0,(q_0)^k)$ and proceeds in $\omega$ many rounds. For each $i \in \mathbb{N}$, the $(i+1)^{th}$ round starts at a position $(q_i,(p^1_i,p^2_i,\cdots,p^k_i)) \in Q \times Q^k$, and proceeds as follows:\begin{itemize}
    \item Adam selects a letter $a_i \in \Sigma$
    \item Eve selects a transition $q_i \xrightarrow{a_i} q_{i+1} \in \Delta$
    \item Adam selects $k$ transitions $p^1_i \xrightarrow{a_i} p^1_{i+1}, p^2_i \xrightarrow{a_i} p^2_{i+1}, \cdots p^k_i \xrightarrow{a_i} p^k_{i+1},$
\end{itemize}
The new position is $(q_{i+1},(p^1_{i+1},p^2_{i+1},\cdots,p^k_{i+1}))$, from where the $(i+2)^{th}$ round begins.

Thus, in  a play of the $k$-token game, Eve constructs a run and Adam $k$ runs, all on the same word. Eve wins such a play if the following holds: if one of Adam's $k$ runs is accepting, then Eve's run is accepting.
\end{definition}

Bagnol and Kuperberg have shown that for any parity automaton $\Ac$, the $2$-token game of $\Ac$, and the $k$-token game of $\Ac$ for any $k \geq 2$, are equivalent.
\begin{lemma}[\cite{BK18}]
    Given a parity automaton $\Ac$, Eve wins $2$-token game of $\Ac$ if and only if Eve wins the $k$-token game of $\Ac$ for all $k \geq 2$. 
\end{lemma}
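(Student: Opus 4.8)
The direction in which Eve winning the $k$-token game for all $k \ge 2$ implies that she wins the $2$-token game is immediate, by instantiating $k = 2$. So the content of the lemma is the converse: if Eve wins the $2$-token game of $\Ac$, then she wins the $k$-token game of $\Ac$ for every $k \ge 2$. Let us write $\Gc_k$ for the $k$-token game of $\Ac$. I would prove this by induction on $k$, the base case $k = 2$ being trivial, as Eve just plays her given winning strategy in $\Gc_2$.

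For the inductive step, fix $k \ge 3$, let $\sigma_{k-1}$ be a winning strategy for Eve in $\Gc_{k-1}$ (which exists by the induction hypothesis applied to $k-1 \ge 2$), and let $\sigma_2$ be a winning strategy for Eve in $\Gc_2$. Eve plays $\Gc_k$ by maintaining two simulated sub-games in her head. At round $i$, Adam reveals a letter $a_i$ and transitions for his $k$ tokens $p^1_i \xrightarrow{a_i} p^1_{i+1}, \dots, p^k_i \xrightarrow{a_i} p^k_{i+1}$. Eve first applies $\sigma_{k-1}$ to $a_i$ together with the transitions of $p^1, \dots, p^{k-1}$, obtaining a transition of a virtual run $r$; she then applies $\sigma_2$ to $a_i$ together with the transition of $r$ just produced and the transition of $p^k$, obtaining the transition she actually plays in $\Gc_k$, which produces her run $r'$. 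Since $\sigma_{k-1}$ and $\sigma_2$ are themselves on-the-fly strategies and the transition of $r$ at round $i$ is computed before Eve must commit to her move at round $i$, this composite strategy is well-defined and causal; and since a strategy only ever outputs legal transitions of $\Ac$, both $r$ and $r'$ are genuine runs of $\Ac$ on Adam's word, so feeding $r$ as one of the token-runs of the simulated $\Gc_2$ play is legitimate.

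Correctness follows by chaining the two guarantees. If one of $p^1, \dots, p^{k-1}$ is accepting, then $r$ is accepting because $\sigma_{k-1}$ is winning in $\Gc_{k-1}$; and then $r'$ is accepting because $\sigma_2$ is winning in $\Gc_2$ and one of its two token-runs, namely $r$, is accepting. If instead $p^k$ is accepting, then $r'$ is accepting directly, again because $\sigma_2$ is winning and $p^k$ is one of its token-runs. Hence whenever some token of Adam is accepting, Eve's run $r'$ is accepting, so the composite strategy is winning in $\Gc_k$, which completes the induction. Morally, this is a composition of $2$-token games along a binary tree with Adam's $k$ tokens at the leaves and Eve's token at the root, propagating the acceptance guarantee from the leaves up; the induction simply peels off one leaf at a time. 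The one point I would take care over is the causality of the composite strategy — checking that each of Eve's moves depends only on the prefix of the play seen so far, which reduces to the corresponding property of $\sigma_{k-1}$ and $\sigma_2$ and to the absence of any circular dependency between the simulated games. The remaining checks, namely that $r$ is a valid run of $\Ac$ and that the parity acceptance condition is transferred correctly round by round, are routine.
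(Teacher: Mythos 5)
The paper does not prove this lemma itself---it is quoted from Bagnol and Kuperberg \cite{BK18}---and your argument is precisely the strategy-composition proof given there (peel off one token, route the first $k-1$ tokens through a winning $(k-1)$-token strategy to produce a virtual run $r$, then play a winning $2$-token strategy against $r$ and the last token), so this is essentially the same approach and it is correct. The one imprecision is the within-round ordering: in the $k$-token game as defined, Eve commits to her transition at round $i$ \emph{before} Adam reveals his $k$ transitions for that round, so your composite strategy should be read as feeding $\sigma_{k-1}$ and $\sigma_2$ only the token transitions from rounds $0,\dots,i-1$ (plus the current letter $a_i$); since those strategies can only depend on that information anyway, causality holds and your acceptance-chaining argument goes through unchanged.
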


If $\Ac$ is a nondeterministic B\"uchi or co-B\"uchi automaton, then Eve wins the $2$-token game of $\Ac$ if and only if $\Ac$ is history-deterministic\cite{BK18,BKLS20b}, and it is conjectured that this result extends to all parity automata.

\begin{quote}
    {\scshape Two-token conjecture:} Given a nondeterministic parity automaton $\Ac$, Eve wins the $2$-token game of $\Ac$ if and only if $\Ac$ is history-deterministic.
\end{quote}
\section{Simulation is $\NP$-hard}\label{sec:NP-hardnessforsimulation}
In this section, we show that the problem of deciding if a parity automaton simulates another is $\NP$-hard, by giving a reduction from the problem of deciding whether Eve wins a 2-D parity game, which was shown to be $\NP$-complete by Chatterjee, Henzinger and Piterman~\cite{CHP07}. Since a simulation game can be solved in $\NP$ (see \cref{subsec:Prelims-simulation}), we obtain $\NP$-completeness. 
\begin{theorem}\label{thm:Sim-is-NP-hard}
Given two parity automata $\Ac$ and $\Bc$, deciding if $\Ac$ simulates $\Bc$ is $\NP$-complete.
\end{theorem}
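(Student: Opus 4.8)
The plan is to reduce from {\scshape 2-D parity game}, which is $\NP$-complete. Given a 2-D parity game $\Gc$ on an arena $G = (V, E)$ with priority functions $\pi_1 : E \to [0,d_1]$ and $\pi_2 : E \to [0,d_2]$, I want to build two parity automata $\Ac$ and $\Bc$ over a common alphabet so that Eve wins $\Gc$ if and only if $\Ac$ simulates $\Bc$. The natural idea is to make the simulation game $Sim(\Ac,\Bc)$ mimic the play of $\Gc$: Adam's moves in the simulation game (choosing a letter and a transition in $\Bc$) should correspond to Adam's moves in $\Gc$ at vertices of $V_\adam$, and Eve's moves (choosing a transition in $\Ac$) should correspond to Eve's moves in $\Gc$ at vertices of $V_\eve$. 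Since a simulation game is, as noted in \cref{subsec:Prelims-simulation}, exactly a 2-D parity game on the product arena with one priority coming from $\Ac$ and the other from $\Bc$, the correspondence is forced: I will route the $\pi_1$-priorities into $\Bc$ (so that ``Adam's run in $\Bc$ is accepting'' matches ``the play satisfies $\pi_1$'') and the $\pi_2$-priorities into $\Ac$ (so ``Eve's run in $\Ac$ is accepting'' matches ``the play satisfies $\pi_2$''). Then Eve wins $Sim(\Ac,\Bc)$ --- meaning her $\Ac$-run is accepting or Adam's $\Bc$-run is rejecting --- exactly when the corresponding play of $\Gc$ is winning for Eve, i.e. satisfies $\pi_1 \Rightarrow \pi_2$.

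The first concrete step is to fix the alphabet and the shape of the two automata. A clean choice is to let the alphabet encode the edges of $G$ (or the vertices), with $\Bc$'s states being (roughly) the vertices of $G$ and $\Bc$'s transitions carrying the $\pi_1$-priorities, so that runs of $\Bc$ correspond to plays of $\Gc$ and Adam controls which edge is taken out of an $\adam$-vertex. The automaton $\Ac$ must then be able to ``follow'' any play, contributing the $\pi_2$-priority of each edge taken, while at $\eve$-vertices it is Eve (in the simulation game) who gets to pick which successor edge is used --- so $\Ac$'s transition structure must encode Eve's choices in $\Gc$. The second step is to handle the turn-based/alternation mismatch: in $\Gc$ a single player moves each round, whereas in $Sim(\Ac,\Bc)$ Adam always moves first (picking a letter and a $\Bc$-transition) and Eve responds. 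I would address this with a standard gadget: split each round of $\Gc$ into two half-rounds using auxiliary letters, so that at a $V_\eve$-vertex Adam is forced (by the automaton structure, using priorities to punish deviation) to play a ``neutral'' letter and let Eve make the genuine choice, and at a $V_\adam$-vertex the roles are arranged dually. One must also prevent Adam from cheating in the simulation game by playing letters that do not correspond to legal edges; the usual fix is a sink/trap structure in $\Ac$ that accepts (with a dominant even priority) whenever Adam plays an illegal letter, so such deviations only hurt Adam.

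The third step is the correctness argument, in two directions. If Eve wins $\Gc$ with a positional strategy $\sigma : V_\eve \to E$ (which exists, as noted, since 2-D parity games are Rabin games), she translates it into a strategy in $Sim(\Ac,\Bc)$: simply play the $\Ac$-transition corresponding to $\sigma$ at $\eve$-half-rounds and the forced transition elsewhere; any resulting simulation play projects to a $\sigma$-play of $\Gc$, hence satisfies $\pi_1 \Rightarrow \pi_2$, hence (translating back through the priority assignment) is winning for Eve in $Sim(\Ac,\Bc)$. Conversely, a winning Eve strategy in the simulation game projects to a winning Eve strategy in $\Gc$, using that Adam's deviations (illegal letters, wrong half-round letters) are already losing for him. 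The last step is the bookkeeping: count priorities and states to confirm the reduction is polynomial (the parity indices of $\Ac$ and $\Bc$ are essentially $d_2$ and $d_1$ plus a constant, and the sizes are linear in $|V|$), and invoke that \simproblem $\in \NP$ to upgrade $\NP$-hardness to $\NP$-completeness.

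The main obstacle I anticipate is the alternation gadget: getting the half-round construction right so that (i) at $\eve$-vertices Adam genuinely cannot preempt Eve's choice and vice versa, (ii) no spurious infinite plays are introduced that change the parity outcome (the auxiliary letters must be given priorities low enough to be irrelevant in the limit, or the gadget must be ``transient''), and (iii) illegal-move punishment is consistent across both automata simultaneously --- a letter that is ``illegal'' must be a win for Eve in $Sim(\Ac,\Bc)$ regardless of whether the culprit is a bad $\Bc$-transition Adam chose or a bad letter, which requires care since $\Ac$ and $\Bc$ read the same letter. Everything else is essentially the routine observation that $Sim(\Ac,\Bc)$ \emph{is} a 2-D parity game and the encoding just has to be faithful.
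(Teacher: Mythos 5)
Your overall architecture matches the paper's: reduce from \twodimparity, let the alphabet encode the edges of $\Gc$, give $\Bc$ the $\pi_1$-priorities and $\Ac$ the $\pi_2$-priorities, and make any deviation by Adam immediately losing for him. The gap is that the one genuinely non-trivial part of the proof --- the gadget you yourself flag as ``the main obstacle'' --- is left as a promise (``I would address this with a standard gadget'') rather than constructed, and it is not a standard off-the-shelf gadget: it is essentially the entire content of the reduction. Concretely, you never say how Eve's ``genuine choice'' at an Eve-vertex $v$ is recorded and then \emph{enforced}, given that Adam, not Eve, picks every letter. The paper's mechanism is: $\Bc$ (there called $\Dc$) is deterministic and from $v$ reads a padding letter $\$$ into a state $v_D$ from which all edges out of $v$ are enabled; $\Ac$ (there called $\Hc$) is nondeterministic on $\$$ and branches into states $(v_H,f)$, one per edge $f$ out of $v$, so that Eve's $\$$-transition \emph{is} her move in $\Gc$; and Adam is then obliged to replay $f$ as the next letter, because from $(v_H,f)$ every other legal letter $f'\neq f$ sends Eve to exactly the state of an embedded copy of $\Dc$ that Adam's token occupies, after which she copies his transitions forever and wins. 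The statement ``Adam is forced to play a neutral letter and let Eve make the genuine choice'' does not explain how a choice made by selecting a transition in $\Ac$ becomes binding on Adam's subsequent letters; without that, the two directions of your correctness argument cannot be carried out.

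Two smaller points. First, your proposed punishment via an always-accepting sink in $\Ac$ would in fact work for this theorem (Eve's run becomes accepting, so she wins that play of the simulation game), provided ``illegal'' is understood relative to Eve's recorded commitment $(v_H,f)$ rather than relative to $\Bc$, where $f'$ is a perfectly legal letter; note, though, that the paper deliberately uses the copy-of-$\Dc$ escape instead, since a universally accepting sink would change $L(\Ac)$ and break the reuse of the same construction for history-determinism in \cref{sec:NP-hardnessforHD}. Second, your worries about a dual gadget at Adam-vertices and about auxiliary letters perturbing the parity outcome dissolve in the actual construction: at Adam-vertices no padding is needed (Adam plays the edge directly and both automata follow it deterministically), and the single padding letter $\$$ carries priority $0$, which never affects the highest priority seen infinitely often.
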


Since $\Ac$ simulates $\Bc$ if and only if Eve wins the simulation game, which is a 2-dimensional parity game (see \cref{subsec:Prelims-simulation}), and deciding if Eve wins a 2-D parity game is in $\NP$~\cite{CHP07}, we get that the problem of checking for simulation is in $\NP$. Hence, we show that \simproblem is $\NP$-hard in the rest of this section, by giving a reduction from \twodimparity.

 Let $\Gc$ be a two-dimensional parity game played on the arena $G=(V,E)$, with two priority functions $\chi_1$ and $\chi_2$. We recall that the winning condition for Eve in such a game requires a play to satisfy the $\chi_2$-parity condition if the $\chi_1$-parity condition is satisfied (see \cref{def:twodimparitygame}).

\subsubsection*{Overview of the reduction.}
We shall construct two parity automata $\Hc$ and $\Dc$ such that $\Hc$ simulates $\Dc$ if and only if Eve wins $\Gc$. The automata $\Hc$ and $\Dc$ are over the alphabet $E \cup \{\$\}$, where $\$$ is a letter added for padding. The automaton $\Dc$ is deterministic, while the automaton $\Hc$ has nondeterminism on the letter $\$$ and contains a copy of $\Dc$. 

Adam, by his choice of letter in $Sim(\Hc,\Dc)$, captures his moves from Adam vertices in $\Gc$. Similarly, Eve, by means of choosing her transition on $\$$ in $\Hc$, captures her moves from Eve vertices in $\Gc$.  After each $\$$-round in $Sim(\Hc,\Dc)$, we require Adam to `replay' Eve's choice as the next letter. Otherwise, Eve can take a transition to the same state as Adam (recall that $\Hc$ contains a copy of $\Dc$), from where she wins the play in $Sim(\Hc,\Dc)$ by copying Adam's transitions in each round from here on-wards. The priorities of $\Dc$ are based on $\chi_1$, while the priorities of $\Hc$ are based on $\chi_2$. This way $\Dc$ and $\Hc$ roughly accept words that correspond to plays in $\Gc$ satisfying $\chi_1$ and $\chi_2$ respectively. 

We first present our reduction on an example 2-D parity game whose sub-game consists of vertices $u,v,v',w,w'$ with edges between them as shown in \cref{figure:sim-np-hardness}. For Adam's vertex $u$, we have corresponding states $u_D$ in $\Dc$ and $u_H$ in $\Hc$. An Adam move from $u$ in $\Gc$ corresponds to one round of $Sim(\Hc,\Dc)$ from the position $(u_D,u_H)$. 
In $\Gc$, Adam chooses an outgoing edge, say $e=(u,v)$ from $u$ such that $\chi_1(u)=c_1$ and $\chi_2(u)=c_2$. This corresponds to Adam choosing the letter $e$ in $Sim(\Hc,\Dc)$. We then have the corresponding unique transitions $u_D \xrightarrow{e:c_1} v_{\$}$ in $\Dc$ and $u_H \xrightarrow{e:c_2} v_H$ in $\Hc$, and hence the simulation game goes to $(v_{\$},v_H)$. 
\begin{figure}
     \centering
    \begin{tikzpicture}
\usetikzlibrary {graphs,quotes}
    \node (G) at (-1,0) {$\Gc$:};
    \node (u) [regular polygon, regular polygon sides = 5, minimum size = 0.25 cm, draw]    at (0,0)   {$u$};
    \node (v) [rectangle, minimum size = 0.6 cm, draw] at (3,0) {$v$};
    \node (w) [regular polygon, regular polygon sides = 5, minimum size = 0.25 cm, draw] at (9,0) {$w$};
    \node (v') [rectangle, minimum size = 0.6 cm, draw] at (3,-2) {$v'$};
    \node (w') [regular polygon, regular polygon sides = 5, minimum size = 0.25 cm, draw] at (9,-2) {$w'$};
    \path[-stealth]
    (u) edge node [above] {$e:(c_1,c_2)$} (v)
    (u) edge node [above, rotate = 325] {$e':(c_3,c_4)$} (v')
    (v) edge node [above] {$f:(c_5,c_6)$} (w)
    (v) edge node [above, rotate = 342] {$f':(c_7,c_8)$} (w')
    (v') edge node [above] {$g:(c_9,c_{10})$} (w')
    ;

    \node (A) [Plum] at (-0.7,-4) {$\Dc$:};
    \node (ua) [circle, minimum size = 0.5cm, draw] at (0,-4) {$u_D$};
    \node (vdollar) [circle, minimum size = 0.5cm, draw] at (3,-4) {$v_{\$}$};
    \node (va) [circle, minimum size = 0.5cm, draw] at (6,-4) {$v_D$};
    \node (wa) [circle, minimum size = 0.5cm, draw] at (9,-4) {$w_D$};
    \node (vdollar') [circle, minimum size = 0.5cm, draw] at (3,-6) {$v'_{\$}$};
    \node (va') [circle, minimum size = 0.5cm, draw] at (6,-6) {$v'_D$}; 
    \node (wa') [circle, minimum size = 0.5cm, draw] at (9,-6) {$w'_D$};
    \path[-stealth]
    (ua) edge node [above] {$e:c_1$} (vdollar)
    (ua) edge node [above, rotate = 330] {$e':c_3$} (vdollar')
    (vdollar) edge node [above] {$\$:0$} (va)
    (vdollar') edge node [above] {$\$:0$} (va')
    (va) edge node [above] {$f:c_5$} (wa)
    (va) edge node [above, rotate = 330] {$f':c_7$} (wa')
    (va') edge node [above] {$g:c_9$} (wa')
    ;

    \node (B) [Dandelion] at (-1.3,-8) {$\Hc$:};
    \node (ub) [circle, minimum size = 0.5cm, draw] at (0,-8) {$u_H$};
    \node (vb) [circle, minimum size = 0.5cm, draw] at (3,-8) {$v_H$};
    \node (vb') [circle, minimum size = 0.5cm, draw] at (3,-11) {$v'_H$};
    \node (vbf) [ellipse, minimum size = 0.5cm, draw] at (6,-8) {$v_H,f$};
    \node (vbf') [ellipse, minimum size = 0.5cm, draw] at (6,-9.5) {$v_H,f'$};
    \node (vbg) [ellipse, minimum size = 0.5cm, draw] at (6,-11) {$v_H,g$};
    \node (wb) [circle, minimum size = 0.5cm, draw] at (9,-8) {$w_H$};
    \node (wb') [circle, minimum size = 0.5cm, draw] at (9,-11) {$w'_H$};
    \path[-stealth]
    (ub) edge node [above] {$e:c_2$} (vb)
    (ub) edge node [above, rotate = 315] {$e':c_4$} (vb')
    (vb) edge node [above] {$\$:0$} (vbf)
    (vb) edge node [above, rotate = 345] {$\$:0$} (vbf')
    (vb') edge node [above] {$\$:0$} (vbg)
    (vbf) edge node [above] {$f:c_6$} (wb)
    (vbf') edge node [above, rotate = 345] {$f':c_8$} (wb')
    (vbg) edge node [above] {$g:c_{10}$} (wb')
    (vbf) edge [red, dashed] node [above, rotate = 30] {$f':c_8$} (wa')
    (vbf') edge [red, dashed, bend right = 51] node [above, rotate = 40, near start] {$f:c_6$} (wa)
    ;

     \begin{scope}[on background layer]
    \node (s2) [draw = DarkOrchid, fill = Purple!20, inner sep = 1pt, rectangle, fit = (ua) (wa')] {};
    \end{scope}
    \begin{scope}[on background layer]
        \node (s1) [draw = Dandelion, fill = Dandelion!20, semitransparent, inner sep=1pt,rectangle,fit=(wb') (A) (s2)]  {};
    \end{scope}
    \end{tikzpicture}
      \caption{A snippet of a game $\Gc$, and the corresponding automata $\Dc$ and $\Hc$ constructed in the reduction. The Adam vertices are represented by pentagons and Eve vertices by squares. The automaton $\Dc$ is deterministic, and $\Hc$ contains a copy of $\Dc$.}
    \label{figure:sim-np-hardness}
\end{figure}

An Eve move from $v$ in $\Gc$ corresponds to two rounds of the simulation game from $(v_{\$},v_H)$. In $Sim(\Hc,\Dc)$, Adam must select a letter $\$$ and the unique $\$$ transition $v_{\$} \xrightarrow{\$:0} v_D$ on $\Dc$, since $\$$ is the only letter on which there is an outgoing transition from $v_{\$}$. Eve must now select a transition on $\$$ from $v_{H}$. Suppose she picks $v_{H} \xrightarrow{\$:0} (v_{H},f)$ where $f=(v,w)$ is an outgoing edge from $v$ in $\Gc$ with $\chi_1(f)=c_5$ and $\chi_2(f)=c_6$. This corresponds to Eve selecting the edge $f$ from her vertex $v$ in $\Gc$. The simulation game goes to the position $(v_D,(v_H,f))$. From here, Adam may select any outgoing edge from $v$ as the letter. If he picks $f' = (v,w')$ and the transition $v_D \xrightarrow{f':c_7} w'_D$, then Eve can pick the transition $(v_{H},f) \xrightarrow{f':c_8} w'_D$ and move to the same state as Adam: such transitions are indicated by dashed edges in \cref{figure:sim-np-hardness}. From here, Eve can win $Sim(\Hc,\Dc)$ by simply copying Adam's transitions. Otherwise, Adam picks the edge $f$ as the letter, same as Eve's `choice' in the previous round, resulting in the transition $v_D \xrightarrow{f:c_5} w_D$ in $\Dc$ and $(v_H,f)\xrightarrow{f:c_6}w_H$ in $\Hc$, and the simulation game goes to the position $(w_D,w_H)$, from where the game continues similarly.

\subsubsection*{The reduction.}
We now give formal descriptions of the two parity automata $\Dc$ and $\Hc$ such that $\Hc$ simulates $\Dc$ if and only if Eve wins $\Gc$. We encourage the reader to refer to \cref{figure:sim-np-hardness} while reading the construction of the automata described below. 

Both automata $\Dc$ and $\Hc$ are over the alphabet $\Sigma = E \cup \{\$\}$. The automaton $\Dc$ is given by $\Dc = (P, \Sigma, p_0,\Delta_D, \Omega_D)$, where the set $P$ consists of the following states:
\begin{itemize}
    \item states $u_D$ for each Adam vertex $u \in V_{\adam}$,
    \item states $v_{\$}$ and $v_{D}$ for each Eve vertex $v \in V_{\eve}$.
\end{itemize}
The state $p_0 = \iota_D$ is the initial vertex, where $\iota$ is the initial vertex of the game $\Gc$.
The set $\Delta_D$ consists of the following transitions with their priorities (given by $\Omega_D$) as indicated:
\begin{itemize}
    \item transitions $u_D \xrightarrow{e:\chi_1(e)} v_{\$}$ for every edge $e = (u,v)$ in $\Gc$ such that $u \in V_{\adam}$ is an Adam-vertex in $\Gc$,
    \item transitions $v_{\$} \xrightarrow{\$:0} v_D$ for every $v \in V_{\eve}$ that is an Eve-vertex in $\Gc$,
    \item transitions $v_D \xrightarrow{f:\chi_1(f)} w_D$ for every edge $f=(v,w)$ in $\Gc$ such that $v \in V_{\eve}$ is an Eve vertex in $\Gc$
\end{itemize}

The automaton $\Hc$ is given by $\Hc = (Q, \Sigma, q_0, \Delta_H,\Omega_H)$, where the set $Q$ consists of the following states:
\begin{itemize}
    \item states $u_H$ for each Adam vertex $u \in V_{\adam}$,
    \item states $v_{H}$ for each Eve vertex $v \in V_{\eve}$,
    \item states $(v_{H},f)$ for each edge $f=(v,w)$ in $\Gc$ such that $v \in V_{\eve}$ is an Eve vertex,
    \item all states in $P$, the set of states of $\Dc$.
\end{itemize}
The state $q_0=\iota_H$ is the initial vertex. The set $\Delta_H$ consists of the following transitions with their priorities (given by $\Omega_H$) as indicated:
\begin{itemize}
    \item transitions $u_H \xrightarrow{e:\chi_2(e)} v_H$ for every edge $e = (u,v)$ in $\Gc$ such that $u \in V_{\adam}$ is an Adam-vertex in $\Gc$,
    \item transitions $v_H \xrightarrow{\$:0} (v_H,f)$ for every edge $f = (v,w)$ in $\Gc$ that is outgoing from an Eve-vertex $v \in V_{\eve}$,
    \item transitions $(v_H,f) \xrightarrow{f:\chi_2(f)} w_H$ for every edge $f=(v,w)$ in $\Gc$ outgoing from an Eve-vertex $v \in V_{\eve}$,
    \item transitions $(v_H,f) \xrightarrow{f':\chi_2(f')} w'_D$ for every edge $f'=(v,w') \neq f$ in $\Gc$ outgoing from an Eve-vertex $v \in V_{\eve}$,
    \item all transitions of $\Dc$.
\end{itemize}
Note that, by construction, $\Hc$ contains a copy of $\Dc$ as a sub-automaton. 

\subsubsection*{Correctness of the reduction.}
We now show that Eve wins the simulation game $Sim(\Hc,\Dc)$ if and only if Eve wins the game $\Gc$. Call any play of the simulation game \emph{\honest} if the following holds: whenever Eve's state in $\Hc$ is at $(v_H,f)$ at the start of a round of $Sim(\Hc,\Dc)$, Adam plays the letter $f$. If Adam plays a letter $f' \neq f$, then we call such a move \emph{\dishonest}. Any play consisting of a \dishonest move is called a \dishonest play.

It is clear that Eve wins any play of $Sim(\Hc,\Dc)$ that is \dishonest, since a \dishonest move causes Eve's state in $\Hc$ and Adam's state in $\Dc$ to be the same in $Sim(\Hc,\Dc)$. Then, both Eve's and Adam's runs are identical and determined by the choices of Adam's letters. In particular, Eve's run is accepting if Adam's run is. 

Thus, it suffices to consider only \honest plays. We first observe an invariant that is preserved throughout any \honest play of $Sim(\Hc,\Dc)$.
\begin{quote}\emph{Invariant:}
    At the start of any round of the simulation game $Sim(\Hc,\Dc)$ following an \honest play:
\begin{itemize}
    \item Adam's state is at $u_D$ for some $u \in V_{\adam}$ if and only if Eve's state is at $u_H$
    \item Adam's state is at $v_{\$}$ for some $v \in V_{\eve}$ if and only if Eve's state is at $v_H$
    \item Adam's state is at $v_D$ for some $v \in V_{\eve}$ if and only if Eve's state is at $(v_H,f)$ for some edge $f$ that is outgoing from $v$.
\end{itemize}
\end{quote}
This invariant is easy to observe from the construction, and can be shown by a routine inductive argument.

Note that if Adam constructs the word $w = e_0 \$ f_0 e_1 \$ f_1\ldots$---which we denote by $(e_i \$ f_i)_{i\geq 0}$ for succinctness---in an \honest play of $Sim(\Hc,\Dc)$, then Eve's run on $\Hc$ is uniquely determined, since the letter $f_i$ indicates how nondeterminism on $\Hc$ was resolved by Eve on the $i^{th}$ occurrence of $\$$ in $Sim(\Hc,\Dc)$. Thus, any \honest play in the simulation can be thought of as Adam selecting the $e_i$'s and Eve selecting the $f_i$'s, resulting in the word $w = (e_i \$ f_i)_{i \geq 0}$ being constructed in the simulation game. Note that then, by construction, $(e_i  f_i)_{i \geq 0}$ is a play in $\Gc$. Conversely, if $(e_i  f_i)_{i \geq 0}$ is a play in $\Gc$, then there is an \honest play of $Sim(\Hc,\Dc)$ whose word is $w = (e_i \$ f_i)_{i \geq 0}$. 

Furthermore, observe that the transitions on a letter $e \in E$ in $\Dc$ and $\Hc$ in any \honest play have the priorities $\chi_1(e)$ and $\chi_2(e)$ respectively, while transitions on $\$$ have priority 0. Thus, in a \honest play of $Sim(\Hc,\Dc)$ whose word is $(e_i \$ f_i)_{i \geq 0}$, the highest priorities occurring infinitely often in the run on $\Dc$ and $\Hc$ are the same as the highest $\chi_1$-priority and $\chi_2$-priority occurring infinitely often in the play $(e_i  f_i)_{i \geq 0}$ respectively. 

Thus, an \honest play in $Sim(\Hc,\Dc)$ whose word is $w = (e_i \$ f_i)_{i \geq 0}$ is winning for Eve if and only if the play $(e_i  f_i)_{i\geq 0}$ in $\Gc$ is winning for Eve. Since Eve wins any \dishonest play, the equivalence of the games $\Gc$ and $Sim(\Hc,\Dc)$ follows easily now. If Eve has a winning strategy in $\Gc$, she can use her strategy to select transitions so that the word $w=(e_i \$ f_i)_{i \geq 0}$ that is constructed in any \honest play $\rho$ of $Sim(\Hc,\Dc)$ corresponds to a winning play for her in $\Gc$, and hence $\rho$ is winning in $Sim(\Hc,\Dc)$. If Adam ever makes a \dishonest move, she wins trivially. 

Conversely, if she has a winning strategy in $Sim(\Hc,\Dc)$, then she can use her strategy to choose moves in $\Gc$ so that the play $(e_i  f_i)_{i\geq 0}$ corresponds to a winning \honest play of $Sim(\Hc,\Dc)$ in which the word $(e_i \$ f_i)_{i \geq 0}$ is constructed, thus resulting in the play $(e_i  f_i)_{i \geq 0}$ to also be winning for Eve.
\section{Checking History-Determinism is $\NP$-hard}\label{sec:NP-hardnessforHD}
In this section, we show that the problem of deciding whether a given nondeterministic parity automaton is history-deterministic is $\NP$-hard, as is the problem of deciding whether Eve wins the 1-token game or the 2-token game of a given parity automaton. To show this, we reduce from deciding whether Eve wins a 2-D parity game with priority functions $\chi_1$ and $\chi_2$ that satisfies the following property: any play satisfying the $\chi_2$-parity condition also satisfies the $\chi_1$-parity condition. We call such games `good 2-D parity games'. We first show in \cref{sec:good-games} that deciding whether Eve wins a \speshial 2-D parity game is $\NP$-hard, and then use this to show $\NP$-hardness for the problems mentioned above in \cref{sec:good-to-hd}.

\subsection{Good 2-D parity games}\label{sec:good-games}
\begin{definition}[Good 2-D parity game]
A 2-D parity game $\Gc$ with the priority functions $\chi_1$ and $\chi_2$ is called \emph{good} if any play in $\Gc$ that satisfies $\chi_2$ also satisfies $\chi_1$. 
\end{definition}
We call the problem of deciding whether Eve wins a 2-D parity game as {\scshape good 2-D parity game}.
Chatterjee, Henzinger and Piterman's reduction from SAT to \twodimparity~\cite{CHP07} can also be seen as a reduction to \specialparity, as we show below.

\begin{lemma}\label{lem:special-parity-is-np-hard}
Deciding whether Eve wins a \speshial 2-D parity game is $\NP$-hard.
\end{lemma}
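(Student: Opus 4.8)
The plan is to revisit the Emerson--Jutla / Chatterjee--Henzinger--Piterman reduction from SAT to 2-D parity games and verify that the 2-D parity game it produces is in fact \emph{good}, i.e.\ that every play satisfying $\chi_2$ also satisfies $\chi_1$. Recall the shape of that reduction: given a CNF formula $\varphi$ with variables $x_1,\dots,x_n$ and clauses $C_1,\dots,C_m$, Adam first picks a clause $C_j$, and then Eve must respond by picking a literal of $C_j$; then play returns to Adam picking another clause, and so on forever. The priorities are arranged so that Eve's $\chi_1$/$\chi_2$ objective forces her literal choices to be consistent with a single satisfying assignment: roughly, the variable $x_i$ gets a pair of priorities, and seeing $x_i$ and $\lnot x_i$ both infinitely often is penalised. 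Concretely one sets up the priorities so that $\chi_1$ is satisfied exactly when for every $i$ the literals over $x_i$ chosen infinitely often are not both polarities, and $\chi_2$ is (essentially) trivially satisfied, or vice versa; the exact encoding is what I need to pin down.

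First I would lay out the arena and the two priority functions explicitly, choosing the encoding so that one of the two conditions, say $\chi_2$, is the "hard" one that captures consistency of the assignment, and the other, $\chi_1$, is chosen so that $\chi_2 \Rightarrow \chi_1$ holds \emph{on every play of this particular arena}, not just on plays where Eve follows a good strategy. The cleanest way: make $\chi_1$ use only priorities $\{0,1\}$ in such a way that the $\chi_1$-parity condition is satisfied on essentially all plays (e.g.\ the maximal priority seen infinitely often is forced to be $0$ whenever $\chi_2$ holds), so goodness is immediate; or, dually, set it up so $\chi_2$ implies $\chi_1$ because the set of infinitely-often-seen edges that would violate $\chi_1$ is a subset of those violating $\chi_2$. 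Then I would check the two directions of correctness: if $\varphi$ is satisfiable with assignment $\alpha$, Eve's strategy is to always pick, from clause $C_j$, a literal made true by $\alpha$; this ensures that for no variable are both polarities seen infinitely often, so $\chi_2$ holds and hence (by goodness, or directly) Eve wins. Conversely, if Eve wins, fix a positional winning strategy $\sigma$; the finitely many literals $\sigma$ ever plays cannot contain both $x_i$ and $\lnot x_i$ for any $i$ (else Adam alternates between two clauses to see both infinitely often and violate $\chi_2$ while keeping $\chi_1$ true, contradicting that $\sigma$ wins), so $\sigma$ induces a consistent partial assignment satisfying every clause, which extends to a satisfying assignment.

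The main obstacle, and the place requiring care, is the goodness check itself: I must design $\chi_1$ and $\chi_2$ so that the implication "$\chi_2 \Rightarrow \chi_1$" holds as a property of \emph{the arena}, universally over all plays including the ones where Adam behaves adversarially and Eve plays badly. In the original CHP construction the two conditions need not have this containment, so the real content of the lemma is exhibiting a tweak of that construction (possibly adding an extra priority level, or reindexing so one condition's "bad sets" literally contain the other's) that preserves NP-hardness while making the containment hold syntactically. I would resolve this by taking $\chi_1$ to be a coarsening of $\chi_2$: assign priorities so that whenever a play would fail $\chi_1$ it already exhibits an odd-dominant pattern that also fails $\chi_2$; then $\chi_2 \Rightarrow \chi_1$ is a pointwise statement about limsup of priorities and needs only a one-line verification. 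Finally I would note that the construction is clearly polynomial-time, completing the reduction and hence the $\NP$-hardness of \specialparity.
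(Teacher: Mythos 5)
Your plan follows the same route as the paper --- reuse the Emerson--Jutla/Chatterjee--Henzinger--Piterman SAT reduction (Adam picks a clause, Eve picks a literal in it) and re-engineer the two priority functions so that goodness holds on every play of the arena --- and you correctly identify that the goodness check is the crux. But the proposal stops exactly there: it never exhibits the priority functions, and that assignment is the entire technical content of the lemma. Worse, the first of your two suggested resolutions cannot work. If $\chi_1$ is arranged to be satisfied on (essentially) all plays, Eve's objective ``$\chi_1 \Rightarrow \chi_2$'' collapses to the single parity condition $\chi_2$, and a single parity condition cannot simultaneously make the play ``only $\neg x_i$ seen infinitely often'' winning for Eve and the play ``both $x_i$ and $\neg x_i$ seen infinitely often'' winning for Adam: in any fixed priority assignment these two plays can be forced to have the same dominant $\chi_2$-priority, yet the reduction needs them to have opposite winners (the first must be winning for Eve via falsifying $\chi_1$, the second winning for Adam). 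The whole point of using two conditions is to break exactly this tie, so the ``trivialise one condition'' route is a dead end, not just an unverified detail.

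Your second suggestion (make $\chi_1$ a coarsening/reindexing of $\chi_2$) is the right one, and here is the concrete assignment the paper uses, which you would need to supply and verify: edges leaving Adam vertices get priority $0$ in both functions; an edge selecting the literal $x_j$ gets $\chi_1 = 2j+2$ and $\chi_2 = 2j$; an edge selecting $\neg x_j$ gets $\chi_1 = \chi_2 = 2j+1$. Goodness is then a one-line check (if the dominant $\chi_2$-priority is $2c$, the dominant $\chi_1$-priority is $2c+2$), and the three relevant limit behaviours for the top variable index $i$ come out correctly: only $x_i$ infinitely often gives $\chi_2$ even (Eve wins), only $\neg x_i$ gives $\chi_1$ odd (Eve wins by falsifying the antecedent), and both polarities give $\chi_1$ dominant $2i+2$ even but $\chi_2$ dominant $2i+1$ odd (Adam wins). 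Your correctness argument for both directions is otherwise fine and matches the paper's, but without the explicit priorities neither direction, nor goodness, can actually be checked.
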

\begin{proof}
We reduce from the problem of SAT. Let $\phi$ be a Boolean formula over the variables $X=\{x_1,x_2,\cdots,x_M\}$ that is a conjunction of terms $t_i$ for each $i \in [1,N]$, where each term $t_i$ is a finite disjunction of \emph{literals}---elements of the set $L=\{x_1,x_2,\cdots,x_M,\neg x_1,\neg x_2, \cdots, \neg x_M\}$. We shall construct a \speshial 2-D parity game $\Gc_{\phi}$ such that Eve wins $\Gc_{\phi}$ if and only if $\phi$ has a satisfying assignment. 

Let $T = \{t_1,t_2,\cdots,t_N\}$ be the set of all terms in $\phi$. The game $\Gc_{\phi}$ has the set $T \cup L$ as its set of vertices. The elements of $L$ are Adam vertices, while the elements of  $T$ are Eve vertices. We set the element $x_1$ in $L$ to be the initial vertex. Each Adam vertex $l$ in $L$ has an outgoing edge $e=(l,t)$ to every term $t$ in $T$, and every Eve vertex $t \in T$ has an outgoing edge $f=(t,l)$ to a literal $l$ if $l$ is a literal in $t$. Thus, each play in the game $\Gc_{\phi}$ can be seen as Adam and Eve choosing a term and a literal in that term in alternation respectively.

The game $\Gc_{\phi}$ has priority functions $\chi_1$ and $\chi_2$. To every edge $e = (l,t)$ that is outgoing from an Adam vertex, both priority functions $\chi_1$ and $\chi_2$ assign $e$ the priority 0, i.e., $\chi_1(e)=\chi_2(e)=0$. Every edge $e = (t,l)$ that is outgoing from an Eve vertex is assigned priorities as follows:
\begin{figure}[H]
    \centering
    \begin{minipage}{0.45\textwidth}
		\[\chi_1(e) = 
		\begin{cases}
			2j+2 & \text{ if } l = x_j \\
			2j+1 & \text{ if } l=\neg x_j \\
		\end{cases}\]
	\end{minipage}
    \begin{minipage}{0.45\textwidth}
		\[\chi_2(e) = 
		\begin{cases}
			2j & \text{ if } l = x_j \\
			2j+1 & \text{ if } l=\neg x_j \\
		\end{cases}\]
    \end{minipage}    
\end{figure}
This concludes our description of the game $\Gc_{\phi}$. We now show that $\Gc_{\phi}$ is a good 2-D parity game, which Eve wins if and only if $\phi$ is satisfiable. 

\subsubsection*{$\Gc_{\phi}$ is a good 2-D parity game.}
Let $\rho$ be a play in $\Gc_{\phi}$ that satisfies the $\chi_2$ parity condition. If $2c$ is the largest $\chi_2$-priority occurring infinitely often in $\rho$, then by construction, $2c+2$ is the largest $\chi_1$-priority occurring infinitely often in the $\rho$, which is also even. Thus, $\rho$ satisfies the $\chi_1$ parity condition. 

\subsubsection*{If $\phi$ is satisfiable, then Eve wins $\Gc_{\phi}$.}
Let $f:\{x_1,x_2,\cdots,x_M\} \xrightarrow{} \{\top,\bot\}$ be a satisfying assignment of $\phi$. Let $\sigma$ be a function which assigns, to each term $t_i$, a literal $l \in t_i$ that is assigned $\top$ in $f$. Consider the Eve-strategy $\sigma_{\eve}$ in $\Gc_{\phi}$ defined by $\sigma_{\eve}(t) = (t,\sigma(l))$. We claim that $\sigma_{\eve}$ is a winning strategy. Indeed, let $\rho$ be a play in $\Gc_{\phi}$ following $\sigma_{\eve}$, and consider the largest $i$ such that  $x_i$ or $\neg x_i$ appear infinitely often in $\rho$. Since $\sigma_{\eve}$ is obtained from a satisfying assignment, we know that either only $x_i$ appears infinitely often, or only $\neg x_i$ appears infinitely often. In the former case, the highest $\chi_2$ priority appearing infinitely often is $2i$, which is even, and hence $\rho$ is winning for Eve. In the latter case, the highest $\chi_1$ priority appearing infinitely often is $2i+1$, which is odd, and hence the $\chi_1$-parity condition is not satisfied, implying $\rho$ is winning for Eve.
\subsubsection*{If Eve wins $\Gc_{\phi}$, then $\phi$ is satisfiable.}
If Eve wins $\Gc_{\phi}$, then we know she can win using a positional strategy since $\Gc_{\phi}$ is a 2-dimensional parity game. Let $\sigma_{\eve}:T \xrightarrow{} L$ be such a strategy, where Eve chooses the edge $(t,\sigma_{\eve} (t))$ at a vertex $t$. If there are no two terms $t,t'$ such that $\sigma_{\eve} (t) =x_i$ and $\sigma_{\eve} (t') = \neg x_i$ for some $x_i$, then consider the assignment $\sigma$ defined as follows. The assignment $\sigma$ maps all variables $x$ that are in the image of $\sigma_{\eve}$ to $\top$, while any terms $x_j$ such that neither $x_j$ or $\neg x_j$ appear in the image are assigned $\top$ and $\bot$ respectively. It is clear then that $\sigma$ is a satisfying assignment, since each term $t$ in $\phi$ evaluates to $\top$. 

Otherwise, if there are terms $t,t'$ with $\sigma_{\eve}(t)=x_i$ and $\sigma_{\eve}(t')=\neg x_i$, we claim that Adam wins the game $\Gc_{\phi}$. Adam can alternate between picking $t$ and $t'$, and then the highest $\chi_1$ priority appearing infinitely often is $2i+2$ while the highest $\chi_2$ priority appearing infinitely often is $2i+1$. This implies that the play is winning for Adam, which is a contradiction since $\sigma_{\eve}$ is a winning strategy for Eve. \qed
\end{proof}
\subsection{$\NP$-hardness of checking history-determinism}\label{sec:good-to-hd}
We now  show that deciding the history-determinism, whether Eve wins the 1-token game, and whether Eve wins the 2-token game of a given parity automaton is $\NP$-hard (\cref{thm:HD-is-NP-hard}). Much of the work towards this has already been done in the reduction from \twodimparity to \simproblem given in \cref{sec:NP-hardnessforsimulation}. We show that the automaton $\Hc$ that is constructed when using this reduction from a \speshial 2-D parity game $\Gc$ is such that Eve wins $\Gc$ if and only if $\Hc$ is history-deterministic. Since \specialparity is $\NP$-hard (\cref{lem:special-parity-is-np-hard}), we get that \hdproblem is $\NP$-hard as well.

\begin{lemma}\label{lem:hd-is-np-hard}
    Checking whether a given nondeterministic parity automaton is history-deterministic is $\NP$-hard.
\end{lemma}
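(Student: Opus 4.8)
The plan is to reuse verbatim the automata $\Dc$ and $\Hc$ built in the reduction of \cref{thm:Sim-is-NP-hard}, but now feeding it a \emph{good} 2-D parity game $\Gc$ (whose existence as an $\NP$-hard instance is guaranteed by \cref{lem:special-parity-is-np-hard}), and to show that for such inputs $\Hc$ is history-deterministic if and only if Eve wins $\Gc$. The key observation is that $\Hc$ already contains a deterministic copy of $\Dc$, and the only nondeterminism in $\Hc$ sits on the letter $\$$, at states $v_H$, where Eve must guess an outgoing edge $f$ of the Eve-vertex $v$; the "dishonest escape" transitions $(v_H,f)\xrightarrow{f':\chi_2(f')}w'_D$ let Eve bail out into the $\Dc$-copy whenever Adam's next letter contradicts her guess. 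So a winning strategy for Eve in the letter game of $\Hc$ is essentially a way of resolving these guesses on the fly, which is exactly what a (not-necessarily-positional) winning strategy in $\Gc$ provides.

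First I would establish the "$\Leftarrow$" direction (Eve wins $\Gc$ $\Rightarrow$ $\Hc$ is HD). Eve plays the letter game of $\Hc$ by maintaining a shadow play in $\Gc$: as long as Adam feeds letters of the expected shape $e_0\,\$\,f_0\,e_1\,\$\,f_1\cdots$ with each $e_i\,f_i$ a legal two-move segment in $\Gc$, Eve resolves her $\$$-nondeterminism using her winning $\Gc$-strategy, guessing the edge her strategy would pick; the resulting run on $\Hc$ has the same infinitely-recurring highest priority as the $\chi_2$-value recurring in the shadow $\Gc$-play. If Adam ever deviates — plays a letter $f'\neq f$ after Eve is at $(v_H,f)$, or plays a malformed word that has no run in $\Dc$ at all — Eve either escapes into $\Dc$ and thereafter copies the unique $\Dc$-run transition-by-transition (so her run is accepting iff the word is in $L(\Hc)\supseteq L(\Dc)$, in particular iff Adam's "witness" run exists), or Adam's word is simply not in $L(\Hc)$ and Eve wins vacuously. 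The only subtle point is that $L(\Hc)$ might strictly contain the image of $\Gc$-plays, so I must check that on every word $w\in L(\Hc)$ Eve can produce an accepting run: words accepted via the $\Dc$-copy are handled by the escape-and-copy tactic, and words accepted via the "genuine" $\Hc$-part correspond to $\Gc$-plays satisfying $\chi_2$, which by \emph{goodness of $\Gc$} also satisfy $\chi_1$, so the shadow-strategy run is accepting.

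For the "$\Rightarrow$" direction (Eve does not win $\Gc$ $\Rightarrow$ $\Hc$ is not HD), I would take a winning strategy for Adam in $\Gc$ — which, viewing the good 2-D game as a Streett game for Adam, may be assumed positional, or one can just use an arbitrary one — and have Adam in the letter game always replay honestly (feed $f$ whenever Eve sits at $(v_H,f)$), steering the induced $\Gc$-play according to his winning strategy. Honesty forces Eve to stay in the genuine $\Hc$-part (escaping is never available since Adam never deviates), so her run mirrors a $\Gc$-play that Adam's strategy makes losing for Eve; because $\Gc$ is good, "losing for Eve" means the play satisfies $\chi_1$ but not $\chi_2$ — but then the corresponding word $(e_i\$f_i)_i$ \emph{is} in $L(\Hc)$ (Eve's own run, though rejecting, may not witness this; rather one shows a $\Dc$-run exists because $\chi_1$ is satisfied, and $L(\Dc)\subseteq L(\Hc)$), while Eve's forced run is rejecting, so Eve loses the letter game. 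This is the step I expect to be the main obstacle: pinning down exactly why the replayed word lies in $L(\Hc)$ and why Eve has \emph{no} accepting run on it, given that she is free to escape into $\Dc$ at other moments — the argument must show that any escape is "too late" or lands on a word outside $L(\Hc)$, which is where the good-game hypothesis $\chi_2\Rightarrow\chi_1$ (together with the already-present $\chi_1\Rightarrow\chi_2$ direction from the original construction, making $\Dc$ and the $\Hc$-part language-equivalent on honest words) does the real work. Finally, since the same $\Hc$ witnesses all three problems, I would remark that the 1-token and 2-token game variants follow by the identical argument, observing that Adam's extra token(s) in those games give him no more power than replaying honestly does, and that the HD-player's side is unchanged — giving \cref{thm:HD-is-NP-hard} as a corollary.
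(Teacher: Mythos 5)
Your proposal is correct and uses the same reduction as the paper: instantiate the automata $\Hc,\Dc$ of \cref{thm:Sim-is-NP-hard} on a \speshial 2-D parity game and exploit goodness to get $\Lc_2\subseteq\Lc_1$, hence $L(\Hc)=\Lc_1=L(\Dc)$. The difference is organizational: the paper proves the chain ``Eve wins $\Gc$ iff $\Hc$ simulates $\Dc$ iff $\Hc$ is history-deterministic,'' reusing the simulation equivalence as a black box and adding a short generic argument (an automaton that simulates a language-equivalent deterministic automaton is HD, and conversely), whereas you transport strategies directly between $\Gc$ and the letter game. Both rest on the same two facts: under honest replay Eve's run is forced through the genuine part of $\Hc$ and carries the $\chi_2$-priorities of the induced $\Gc$-play, and any dishonest letter lets Eve drop into the deterministic copy, where acceptance of her run coincides with membership in $L(\Dc)=L(\Hc)$. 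The worry you flag in the Adam direction dissolves exactly as you suspect: when Adam replays honestly, the escape transitions are simply unavailable (from $(v_H,f)$ the letter $f$ has the unique successor $w_H$), so Eve's run is forced; the word lies in $L(\Hc)=\Lc_1$ because Adam's winning play satisfies $\chi_1$, while Eve's run, governed by $\chi_2$, rejects. Two small corrections: Adam's objective in a 2-D parity game is a Streett-type condition, for which positional strategies need not exist, so drop the positionality claim (an arbitrary winning strategy suffices, as you allow, and determinacy of $\omega$-regular games supplies one); and the closing remark that the token games ``follow by the identical argument'' is too quick for the case where $\Hc$ is not HD --- there Adam must additionally produce an accepting run on his own token, which requires him to actively exploit the $\$$-nondeterminism to send that token into the $\Dc$-copy (this is what the paper's appendix does), though none of this is needed for the present lemma.
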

    \begin{proof}
    Let us consider a \speshial 2-D parity game $\Gc$. Recall the construction of the automata $\Hc$ and $\Dc$ in \cref{sec:NP-hardnessforsimulation}, which is such that Eve wins $\Gc$ if and only if $\Hc$ simulates $\Dc$. We will show that if $\Gc$ is a good 2-D parity game, then the following statements are equivalent.
\begin{enumerate}
    \item Eve wins $\Gc$.
    \item $\Hc$ simulates $\Dc$.
    \item $\Hc$ is history-deterministic.
\end{enumerate}
The equivalence of 1 and 3 would then conclude the proof. The equivalence of 1 and 2 has already been shown in the proof of \cref{thm:Sim-is-NP-hard}, and we now focus on showing that 2 and 3 are equivalent. 

Towards this, let $\Sigma = E \cup \{\$\}$, and consider the languages $\Lc_j$  over $\Sigma$ consisting of the words $(e_i  \$  f_i)_{i \geq 0}$ such that $(e_i f_i)_{i\geq 0}$ is a play in $\Gc$ that satisfies $\chi_j$, for $j = 1,2$. By construction, we know $L(\Dc) = \Lc_1$, and $L(\Hc) = \Lc_1 \cup \Lc_2$. Furthermore, since $\Gc$ is \speshial, we know that $\Lc_1 \supseteq \Lc_2$ and hence $L(\Dc) = L(\Hc)$. Observe that by construction, $\Dc$ is deterministic. 

If $\Hc$ is history-deterministic, then since $L(\Dc) = L(\Hc)$, Eve wins the simulation game between $\Hc$ and $\Dc$: she can use her strategy in the letter game of $\Hc$ to play in $Sim(\Hc,\Dc)$, ignoring Adam's transitions in $\Dc$. 

The converse direction follows from \cite[Theorem 4.1]{HP06}, where Henzinger and Piterman show that if a nondeterministic parity automaton $\Nc$ simulates a language-equivalent deterministic parity automaton, then $\Nc$ is history-determin\-istic. We include a proof nevertheless, for self-containment. Supposing $\Hc$ simulates $\Dc$, Eve can use her winning strategy in $Sim(\Hc,\Dc)$ to win the letter game of $\Hc$ as follows. Eve, during the letter game of $\Hc$, will keep in her memory, a play of the game $Sim(\Hc,\Dc)$. On each round in the letter game of $\Hc$, Adam gives a letter, and Eve, in the game $Sim (\Hc,\Dc)$, lets Adam pick the same letter and the unique transition on that letter in $\Dc$. She then uses her strategy in $Sim(\Hc,\Dc)$ to pick a transition in $\Hc$, and she plays the same transition in the letter game of $\Hc$. We claim that any resulting play of the letter game of $\Hc$ if Eve plays as above is winning for Eve. Indeed, if Adam constructs an accepting word in $\Hc$, then it is accepting in $\Dc$ as well. Hence, since $\Dc$ is deterministic, Adam's run on $\Dc$ in the simulation game between $\Hc$ and $\Dc$ that is stored in Eve's memory is accepting. Since Eve is playing according to a winning strategy in $Sim(\Hc,\Dc)$, Eve's run in $\Hc$, which is the same in $Sim(\Hc,\Dc)$ and the letter game of $\Hc$, is accepting as well. Hence, Eve wins the letter game of $\Hc$, and thus $\Hc$ is history-deterministic. \qed
\end{proof}

We also argue in the  appendix that the automaton $\Hc$ in proof of \cref{lem:hd-is-np-hard} above is such that Eve wins the 1-token game of $\Hc$ \emph{if and only if} Eve wins the 2-token game of $\Hc$ \emph{if and only if} $\Hc$ is history-deterministic.  This gives us that checking whether Eve wins the 1-token game or the 2-token game of a parity automaton is $\NP$-hard. Since $1$-token games can naturally be seen as a 2-D parity game, we get that deciding whether Eve wins the 1-token game of a given parity automaton is in $\NP$, and hence the problem is $\NP$-complete.

\begin{theorem}\label{thm:HD-is-NP-hard}
The following problems are $\NP$-hard:
\begin{enumerate}
    \item Given a parity automaton $\Ac$, is $\Ac$ history-deterministic?
    \item Given a parity automaton $\Ac$, does Eve win the $2$-token game of $\Ac$?
\end{enumerate}
Additionally, the following problem is $\NP$-complete: Given a parity automaton $\Ac$, does Eve win the $1$-token game of $\Ac$?
\end{theorem}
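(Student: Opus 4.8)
\textbf{Part 1} is immediate: \cref{lem:hd-is-np-hard} reduces \specialparity to \hdproblem, and \cref{lem:special-parity-is-np-hard} shows \specialparity is $\NP$-hard, so \hdproblem is $\NP$-hard. The remaining content is about the token games, and the plan is to enlarge the equivalence proved inside \cref{lem:hd-is-np-hard} for the automaton $\Hc$ built there from a \speshial 2-D parity game $\Gc$.

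First I would record two automaton-independent implications valid for every parity automaton $\Ac$: if $\Ac$ is history-deterministic then Eve wins the $2$-token game of $\Ac$ (she follows her letter-game strategy and ignores Adam's two tokens; if one of them accepts then the word lies in $L(\Ac)$, so her run accepts), and if Eve wins the $2$-token game of $\Ac$ then she wins the $1$-token game of $\Ac$ (a $1$-token play is the special case of a $2$-token play in which Adam uses two identical tokens). Applied to $\Hc$, together with the equivalence ``$\Hc$ is history-deterministic $\iff$ Eve wins $\Gc$'' established in the proof of \cref{lem:hd-is-np-hard}, this yields the cyclic chain: Eve wins $\Gc$ $\Rightarrow$ $\Hc$ is history-deterministic $\Rightarrow$ Eve wins the $2$-token game of $\Hc$ $\Rightarrow$ Eve wins the $1$-token game of $\Hc$. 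To collapse the chain into equivalences it suffices to prove, for this particular $\Hc$, that Eve winning the $1$-token game of $\Hc$ implies Eve wins $\Gc$; by determinacy I would instead prove the contrapositive: if Adam wins $\Gc$ then Adam wins the $1$-token game of $\Hc$.

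For this step I would fix a positional winning strategy $\sigma_{\adam}$ for Adam in $\Gc$ and describe an Adam strategy in the $1$-token game of $\Hc$ that drives Adam's single token into the deterministic subautomaton $\Dc$ sitting inside $\Hc$ and then lets it follow $\Dc$ faithfully, while pinning Eve's token to the ``$\Hc$-proper'' states $u_H, v_H, (v_H,f)$. Adam chooses the letters $e_i$ according to $\sigma_{\adam}$ and, after each $\$$, \emph{replays honestly}: he plays as the next letter the very edge that Eve's token committed to on the preceding $\$$-transition; honest replay forces Eve's token to remain $\Hc$-proper, so the priorities seen infinitely often along it are exactly the $\chi_2$-priorities of the induced play $\rho = (e_i f_i)_{i \ge 0}$ of $\Gc$. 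Since Adam moves his own token after seeing Eve's $\$$-choice, at the first $\$$-round he can let his token commit to a \emph{different} outgoing edge than Eve's; the following honestly-replayed letter then sends Adam's token along a ``dashed'' transition into $\Dc$, after which it evolves deterministically and the priorities seen infinitely often along it are the $\chi_1$-priorities of $\rho$. As $\rho$ is consistent with the winning strategy $\sigma_{\adam}$, it satisfies $\chi_1$ but not $\chi_2$; hence Adam's token accepts while Eve's rejects, and Adam wins. The main obstacle is making this routing argument go through: one must ensure Adam always reaches a $\$$-round at which his token can branch away from Eve's, i.e.\ an Eve vertex of $\Gc$ with at least two outgoing edges — which one can guarantee by a harmless preprocessing of the SAT instance in \cref{lem:special-parity-is-np-hard} (replace each single-literal term $l$ by the two terms $l \wedge y$ and $l \wedge \neg y$ for a fresh variable $y$) — and one must check that the finitely many transitions preceding the entry into $\Dc$ do not change the parity of either run.

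Finally, for $\NP$-membership of the $1$-token problem I would observe that the $1$-token game of a parity automaton $\Ac$ is, after splitting each round into a constant number of edges, a 2-D parity game of size polynomial in $|\Ac|$: Adam builds the word and his token's run, Eve builds her run, one priority function records Adam's token priorities and the other Eve's, and Eve wins iff her run's parity condition holds whenever Adam's does. Membership in $\NP$ then follows from \cite{CHP07}, and with the $\NP$-hardness above we obtain that deciding whether Eve wins the $1$-token game is $\NP$-complete, completing the proof.
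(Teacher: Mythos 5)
Your proposal is correct and follows essentially the same route as the paper: the paper, too, closes a cycle of equivalences around the automaton $\Hc$ of \cref{lem:hd-is-np-hard}, with the crux being that a winning Adam can win the $1$-token game by branching his single token off Eve's at a $\$$-round and riding a ``dashed'' transition into the deterministic copy of $\Dc$, where his run tracks the $\chi_1$-priorities and accepts, while honest replay pins Eve's token to the part of $\Hc$ tracking $\chi_2$, where her run rejects; $\NP$-membership for the $1$-token game is likewise obtained by viewing it as a 2-D parity game. Two small deviations are worth noting. First, the paper drives Adam's letter choices with a winning strategy in $Sim(\Hc,\Dc)$ rather than in $\Gc$ (equivalent via the proof of \cref{thm:Sim-is-NP-hard}), and it handles your ``branching point'' obstacle without preprocessing: Adam copies Eve's token until she first faces a nondeterministic choice, and such a round must occur, since otherwise Eve's run would be the \emph{unique} run of $\Hc$ on the word and would be rejecting, contradicting that the word lies in $L(\Dc)\subseteq L(\Hc)$. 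Your alternative fix --- padding single-literal terms of the SAT instance with a fresh variable so that every Eve vertex of $\Gc_{\phi}$ has out-degree at least two --- is also sound, just less economical. Second, one genuine but harmless slip: Adam need not have a \emph{positional} winning strategy in a 2-D parity game, since his objective is a conjunction of two parity conditions (a Streett-type condition requiring memory); indeed the Adam strategy in the proof of \cref{lem:special-parity-is-np-hard} alternates between two terms and is not positional. Your argument never actually uses positionality, so replacing ``positional winning strategy'' by ``winning strategy'' repairs this.
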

\section{Language Containment}\label{sec:lang-inclusion}
In this section, we consider the following problem:
\begin{quote}
    {\scshape HD-automaton containment}: Given two parity automata $\Ac$ and $\Bc$ such that $\Bc$ is history-deterministic, is $L(\Ac) \subseteq L(\Bc)$?
\end{quote}
While the problem of checking language inclusion between two non-deterministic parity automata is $\PSPACE$-complete (regardless of whether the parity index is fixed or not)~\cite{KV98,ACCHHMV11}, the same for deterministic parity automata is $\NL$-complete~\cite[Theorem 1]{Sch10}. For history-deterministic parity automata with fixed parity indices, however, the problem of language inclusion reduces to checking for simulation (\cref{lem:inclusion-to-sim}), which can be solved in polynomial time when the parity indices are fixed~\cite{CHP07}. This gives us that checking for language inclusion between two history-deterministic parity automata with fixed parity index can be done in polynomial time (\cref{cor:fixed-parity-index-inc-in-polytime}). This observation has been treated as folklore, and we prove it here for completeness.   

\begin{lemma}[\cite{Sch20,BHLST23}]\label{lem:inclusion-to-sim}
Given a nondeterministic parity automaton $\Ac$ and a history-deterministic parity automaton $\Bc$, the following are equivalent: 
\begin{enumerate}
    \item $\Bc$ simulates $\Ac$
    \item $L(\Ac) \subseteq L(\Bc)$
\end{enumerate}
\end{lemma}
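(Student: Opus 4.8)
The plan is to prove the two implications separately, observing that only $(2)\Rightarrow(1)$ uses the hypothesis that $\Bc$ is history-deterministic; the converse holds for arbitrary nondeterministic $\Ac$ and $\Bc$ and is essentially the standard fact that simulation refines language containment.

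For $(1)\Rightarrow(2)$, I assume $\Bc$ simulates $\Ac$, i.e.\ Eve wins $Sim(\Bc,\Ac)$ with some strategy $\sigma$ (here Adam builds a run in $\Ac$ and Eve a run in $\Bc$). Given $w\in L(\Ac)$, I fix an accepting run $\rho$ of $\Ac$ on $w$, let Adam play $w$ letter-by-letter together with the transitions of $\rho$, and let Eve answer according to $\sigma$; this produces a run of $\Bc$ on $w$. Since $\sigma$ is winning and Adam's run $\rho$ in $\Ac$ is accepting, Eve's run in $\Bc$ must be accepting, so $w\in L(\Bc)$. Hence $L(\Ac)\subseteq L(\Bc)$.

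For $(2)\Rightarrow(1)$, I assume $L(\Ac)\subseteq L(\Bc)$. Since $\Bc$ is history-deterministic, Eve has a winning strategy $\tau$ in the letter game of $\Bc$, and $\tau$ depends only on the prefix of letters read so far. In $Sim(\Bc,\Ac)$, Eve plays as follows: she discards the transitions Adam selects in $\Ac$, retains only the letters $a_0a_1\cdots$ that Adam announces, feeds this sequence to $\tau$, and replies with the transition in $\Bc$ that $\tau$ prescribes. This is a legitimate simulation-game strategy precisely because $\tau$ reacts only to the letter sequence, respecting the on-the-fly constraint. In any resulting play, let $w=a_0a_1\cdots$ be the word, let $\rho_\Ac$ be Adam's run in $\Ac$, and $\rho_\Bc$ Eve's run in $\Bc$. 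If $\rho_\Ac$ is accepting then $w\in L(\Ac)\subseteq L(\Bc)$, so $w\in L(\Bc)$; since $\tau$ wins the letter game and $w\in L(\Bc)$, the run $\rho_\Bc$ is accepting. Thus Eve wins $Sim(\Bc,\Ac)$, i.e.\ $\Bc$ simulates $\Ac$.

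The only delicate point — and the sole place history-determinism is needed — is this strategy transfer in the second implication: it works because history-determinism lets Eve commit to an accepting run of $\Bc$ on the basis of the word alone, independently of how Adam happened to witness membership in $L(\Ac)$. Beyond checking that the translated strategy is well defined round by round, which is immediate, no real obstacle remains.
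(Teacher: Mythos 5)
Your proof is correct and follows essentially the same route as the paper: the forward direction extracts an accepting run of $\Bc$ by playing the simulation game against Adam's witness run, and the converse transfers Eve's letter-game strategy into the simulation game by ignoring Adam's transitions in $\Ac$. Your added remark on where history-determinism is used is accurate but not needed beyond what the paper already does.
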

\begin{proof}
(1) $\Rightarrow$ (2): Fix $\sigma_{\eve}$ to be a winning strategy for Eve in $Sim(\Bc,\Ac)$. Let $w$ be a word  accepted by $\Ac$ via an accepting run $\rho$. Consider a play of $Sim(\Bc,\Ac)$ where Adam constructs the run $\rho$ on the word $w$, and Eve plays according to $\sigma_{\eve}$. Then, the run in $\Bc$ that Eve constructs must be accepting, and hence $w$ is accepted by $\Bc$.

(2) $\Rightarrow$ (1): Let $\sigma_{B}$ be a winning strategy for Eve in the letter game of $\Bc$. Consider the strategy for Eve in $Sim(\Bc,\Ac)$ where Eve chooses the transitions on $\Bc$ according to $\sigma_{B}$, ignoring Adam's transitions in $\Ac$. If Adam constructs an accepting run in $\Ac$ on a word $w$ in $Sim(\Bc,\Ac)$, then  $w \in L(\Ac) \subseteq L(\Bc)$. Hence $\sigma_{B}$ would have constructed an accepting run in $\Bc$ in $Sim(\Bc,\Ac)$. It follows that Eve wins $Sim(\Bc,\Ac)$, and hence $\Bc$ simulates $\Ac$. \qed
\end{proof}

\begin{corollary}\label{cor:fixed-parity-index-inc-in-polytime}
Given a nondeterministic parity automaton $\Ac$ and a history-deterministic parity automaton $\Bc$ such that both $\Ac$ and $\Bc$ have priorities in $[d]$ for a fixed $d$, the problem of whether $L(\Ac) \subseteq L(\Bc)$ can be decided in polynomial time.  
\end{corollary}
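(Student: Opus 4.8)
The plan is to chain \cref{lem:inclusion-to-sim} with the fact that simulation between parity automata of bounded parity index can be checked in polynomial time. By \cref{lem:inclusion-to-sim}, since $\Bc$ is history-deterministic, $L(\Ac) \subseteq L(\Bc)$ holds if and only if $\Bc$ simulates $\Ac$, i.e.\ Eve wins the simulation game $Sim(\Bc,\Ac)$. Hence it suffices to show that, when $\Ac$ and $\Bc$ both have priorities in $[d]$ for a fixed $d$, the winner of $Sim(\Bc,\Ac)$ can be determined in polynomial time.

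For this I would unfold $Sim(\Bc,\Ac)$ as a 2-D parity game: it is played on an arena obtained as a product of (the state spaces of) $\Bc$ and $\Ac$ together with a choice of letter, which has size polynomial in $|\Bc|$, $|\Ac|$ and $|\Sigma|$, and it carries two priority functions $\chi_1,\chi_2$ with values in $[d]$ reflecting the transitions taken in $\Ac$ and in $\Bc$ respectively (Eve's winning condition being: if $\chi_1$ is satisfied, then $\chi_2$ is). Dualising $\chi_1$ as in the remark following the definition of 2-D parity games, Eve's condition becomes a disjunction of two parity conditions, each with at most $d+1$ priorities; a parity condition with $O(d)$ priorities is a Rabin condition with $O(d)$ pairs, so the acceptance condition of $Sim(\Bc,\Ac)$ is a Rabin condition whose number of pairs is bounded by a constant depending only on $d$. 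Rabin games with a fixed number $k$ of Rabin pairs can be solved in time $n^{O(k)}$ on an $n$-vertex arena~\cite{Emerson85,CHP07}; with $k$ constant and the arena of polynomial size, this yields the desired polynomial-time algorithm.

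The only thing that needs care is the bookkeeping in the second step: one must check that translating the ``$\chi_1$ implies $\chi_2$'' condition into a Rabin condition keeps the number of pairs bounded by a constant, which it does precisely because $d$ is fixed --- if $d$ were part of the input this route would only recover the $\NP$ membership already recorded for \simproblem. Beyond that, the statement is an immediate assembly of \cref{lem:inclusion-to-sim} and known algorithms for Rabin games, so I do not anticipate a genuine obstacle.
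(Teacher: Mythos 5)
Your proposal is correct and follows essentially the same route as the paper: the corollary is obtained by combining \cref{lem:inclusion-to-sim} with the fact (cited from Chatterjee, Henzinger and Piterman) that simulation between parity automata with a fixed number of priorities is decidable in polynomial time. The only difference is that you unpack that cited fact via the reduction to Rabin games with a constant number of pairs, which is a correct (and standard) justification the paper leaves implicit.
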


We now focus on the problem \hdcontainment when the parity index is not fixed. From \cref{lem:inclusion-to-sim}, we know that this can be reduced to \simproblem. Since \simproblem is in $\NP$~\cite{CHP07}, we get an immediate $\NP$-upper bound for \hdcontainment\cite[Lemma 3]{Sch20}. We show that we can do better, in quasi-polynomial time, by giving a polynomial time reduction to finding the winner in a parity game\cite{CaludeJKLS22,JL17}. 

Towards this, let us fix a nondeterministic parity automaton $\Ac$ and a history-deterministic parity automaton $\Bc$ over the alphabet $\Sigma$ throughout the rest of this section, for which we want to decide if $L(\Ac) \subseteq L(\Bc)$. Suppose that $\Ac$ has $n_1$ states and priorities in $[d_1]$, and $\Bc$ has $n_2$ states and priorities in $[d_2]$. 

It is well known that every such parity automaton $\Ac$ can be converted efficiently to a language-equivalent nondeterministic B\"uchi automaton $\Ac'$ that has at most $(n_1 \cdot d_1)$ states~\cite{Cho74,SN99}. Then, from \cref{lem:inclusion-to-sim}, it suffices to check if Eve wins the game $Sim(\Bc,\Ac')$. Note that $Sim(\Bc,\Ac')$ is a 2-D parity game $\Gc$ with $(n_1 \cdot d_1 \cdot n_2 \cdot |\Sigma|)$-many vertices that has the priority functions $\chi_1: V \xrightarrow{} [1,2]$ and $\chi_2:V \xrightarrow{} [d_2]$, where $V$ is the set of vertices of $\Gc$. 

The game $\Gc$ can be viewed equivalently as a Muller game with the condition $(C,\Fc)$, where $C = [1,2] \times [d_2]$ and $\Fc$ consists of sets $F \subseteq C$ such that if $\max (F\arrowvert_1)$ is even, then $\max (F\arrowvert_2)$ is even. Here, $F\arrowvert_i$ for $i\in \{1,2\}$ indicates the projection of $F$ onto the $i^{th}$ component. Call the Zielonka tree (\cref{def:Zielonkatree}) of this Muller condition as $Z_{d_2}$. We shall show that the size of $Z_{d_2}$ is polynomial in $d_2$.

\begin{restatable}{lemma}{appendixtwo}\label{lem:Appendix2}
 The Zielonka tree $Z_{d_2}$ has $(\lceil \frac{d_2}{2} \rceil)$ many leaves and its height is $d_2$.
\end{restatable}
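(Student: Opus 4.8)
The plan is to pin down the exact shape of $Z_{d_2}$ by unfolding \cref{def:Zielonkatree} from the root downwards. The key simplification is that, since $F\arrowvert_1\subseteq[1,2]$, the number $\max(F\arrowvert_1)$ is even exactly when $F$ contains a colour of the form $(2,c)$; hence for a non-empty $F\subseteq C$ we have $F\in\Fc$ if and only if $F$ contains no colour $(2,\cdot)$, or $\max(F\arrowvert_2)$ is even. Two consequences I would record first: any set of the form $\{1\}\times X$ lies in $\Fc$ vacuously, and so does every one of its non-empty subsets, so such a set can only label a leaf; and whether $F\in\Fc$ depends on $F$ only through the pair $\big(\text{does }F\text{ meet }\{2\}\times[d_2]?,\ \max(F\arrowvert_2)\big)$.

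Using this I would exhibit the \emph{spine} of $Z_{d_2}$: the sets $S_k:=[1,2]\times[k]$ for $0\le k\le d_2$, so that $S_{d_2}=C$ labels the root and $S_{d_2}\supsetneq S_{d_2-1}\supsetneq\cdots\supsetneq S_0$. The claim, proved by induction following the spine downwards, is that each $S_k$ occurs as a node of $Z_{d_2}$, that $S_k\in\Fc$ iff $k$ is even, and that its children are exactly: $S_{k-1}$ alone, when $k$ is even and $k\ge 1$; the two sets $S_{k-1}$ and $\{1\}\times[k]$ (the latter a leaf, by the first paragraph), when $k$ is odd; and none at all, when $k=0$. Granting the claim, $Z_{d_2}$ is a ``caterpillar'': a longest root-to-leaf path runs along the spine and has $d_2$ edges, so the height is $d_2$; and a count of its leaves --- the side-branches $\{1\}\times[k]$ off the odd-index spine nodes (and the bottom node $S_0$) --- gives $\lceil d_2/2\rceil$, as claimed. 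Since $Z_{d_2}$ then has $O(d_2)$ nodes, the polynomial-size bound needed for \cref{lem:zielonka-automata} and \cref{lem:Muller to Parity reduction} follows.

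I expect the only genuinely non-routine step to be the inductive verification that the listed children are \emph{all} the children of $S_k$ --- that is, that $Z_{d_2}$ does not branch more than claimed (which is exactly what keeps it linear-size rather than exponential). Concretely, for a putative child $F\subsetneq S_k$ with the opposite $\Fc$-status, one uses that a maximal such $F$ is pinned down by the two features isolated in the first paragraph: for each admissible value of the pair $\big(\text{meets }\{2\}\times[k]?,\ \max(F\arrowvert_2)\big)$ the $\subseteq$-largest witnessing set is one of $S_{k-1}$ or $\{1\}\times[k]$, and a short check shows these are incomparable (when both occur) and that nothing else is maximal. The remaining ingredients --- that $S_k\in\Fc$ iff $k$ is even, that $\{1\}\times[k]$ and $S_0$ are leaves, and the final arithmetic --- are straightforward bookkeeping.
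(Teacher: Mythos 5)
Your argument is essentially the paper's proof, unrolled: the paper proceeds by induction on $d_2$, peeling off the root of $Z_{d_2}$ and identifying its children ($[1,2]\times[d_2-1]$ alone when $d_2$ is even; $[1,2]\times[d_2-1]$ together with the leaf $\{1\}\times[d_2]$ when $d_2$ is odd), whereas you describe the resulting caterpillar globally via the spine $S_k=[1,2]\times[k]$ and verify the same child computation at every spine node. The step you rightly flag as the only non-routine one --- that $S_{k-1}$ (and, for odd $k$, also $\{1\}\times[k]$) are \emph{all} the maximal proper subsets of $S_k$ of opposite $\Fc$-status --- is exactly the content of the paper's inductive step, and your justification via the two features ``meets $\{2\}\times[d_2]$?'' and ``$\max$ of the second projection'' is correct.

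One concrete slip in the bookkeeping: your own enumeration of the leaves --- the bottom node $S_0$ together with one side-leaf $\{1\}\times[k]$ for each odd $k\in[1,d_2]$ --- yields $\lceil d_2/2\rceil+1$ leaves, not $\lceil d_2/2\rceil$. (Check $d_2=1$: the root $[1,2]\times[1]$ has the two childless children $[1,2]\times[0]$ and $\{1\}\times[1]$.) This off-by-one is already present in the lemma's statement and in the paper's own base cases, so it is a discrepancy with the claim as written rather than a structural defect of your argument, and it is immaterial downstream since only a polynomial bound on the number of leaves is used in \cref{lem:lastlemma} and \cref{thm:inclusion-theorem}; still, you should not assert that your count ``gives $\lceil d_2/2\rceil$'' when it does not. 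The height claim is fine: the spine contributes a root-to-$S_0$ path of $d_2$ edges, and the side-leaf attached at $S_k$ sits at depth $d_2-k+1\le d_2$.
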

The proof of the lemma, obtained via an inductive argument, can be found in the appendix. \cref{lem:Appendix2} allows us to use \cref{lem:Muller to Parity reduction} on $Sim(\Bc,\Ac')$ to obtain an equivalent Parity game $\Gc'$ with $(n_1 \cdot d_1 \cdot n_2 \cdot |\Sigma| \cdot \lceil \frac{d_2}{2} \rceil)$ vertices which has $d_2+1$ priorities, such that Eve wins $Sim(\Bc,\Ac')$ if and only if Eve wins $\Gc'$. 

\begin{lemma}\label{lem:lastlemma}
    Given a nondeterministic parity automaton $\Ac$ with $n_1$ states and a history-deterministic parity automaton $\Bc$ with $n_2$ states whose priorities are in $[d_2]$ that are both over the alphabet $\Sigma$, the problem of deciding whether $L(\Ac)$ is contained in $L(\Bc)$ can be reduced in polynomial time to finding the winner of a parity game $\Gc$ which has $(n_1 \cdot d_1 \cdot n_2 \cdot |\Sigma| \cdot \lceil \frac{d_2}{2} \rceil)$ many vertices and $d_2+1$ priorities. 
\end{lemma}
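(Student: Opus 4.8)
The plan is to chain together the reductions already set up in the discussion preceding the statement, while tracking the size parameters at each step. First I would apply the classical parity-to-B\"uchi conversion~\cite{Cho74,SN99} to replace $\Ac$ by a language-equivalent nondeterministic B\"uchi automaton $\Ac'$ with at most $n_1 \cdot d_1$ states; this step is polynomial, and it reduces the task to deciding whether $L(\Ac') \subseteq L(\Bc)$.

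Next, since $\Bc$ is history-deterministic, \cref{lem:inclusion-to-sim} gives that $L(\Ac') \subseteq L(\Bc)$ holds if and only if $\Bc$ simulates $\Ac'$, i.e. if and only if Eve wins $Sim(\Bc,\Ac')$. I would then observe, exactly as in \cref{subsec:Prelims-simulation}, that $Sim(\Bc,\Ac')$ is a 2-D parity game $\Gc_0$ whose arena is the product of the two automata (with Adam choosing the letter and his transition in $\Ac'$, and Eve her transition in $\Bc$); its vertex set has size $n_1 \cdot d_1 \cdot n_2 \cdot |\Sigma|$, and it carries priority functions $\chi_1$ with range $[1,2]$ (inherited from the B\"uchi automaton $\Ac'$) and $\chi_2$ with range $[d_2]$ (inherited from $\Bc$), with the winning condition that a play satisfying $\chi_1$ must satisfy $\chi_2$.

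I would then re-express this winning condition as a Muller condition over the colour set $C = [1,2]\times[d_2]$, reading both priorities off each edge: a set $F\subseteq C$ lies in $\Fc$ precisely when $\max(F\arrowvert_1)$ even implies $\max(F\arrowvert_2)$ even. The key quantitative input is \cref{lem:Appendix2}, which bounds the Zielonka tree $Z_{d_2}$ of $(C,\Fc)$ to have only $\lceil d_2/2\rceil$ leaves and height $d_2$. With this in hand, \cref{lem:Muller to Parity reduction} converts $\Gc_0$ into an equivalent parity game $\Gc$ with $n_1 \cdot d_1 \cdot n_2 \cdot |\Sigma| \cdot \lceil d_2/2\rceil$ vertices and $d_2+1$ priorities, so that Eve wins $\Gc$ if and only if Eve wins $Sim(\Bc,\Ac')$. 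Since \cref{lem:Muller to Parity reduction} runs in polynomial time once the Zielonka tree is polynomially sized, the whole chain is a polynomial-time reduction, which is the claim.

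The only real obstacle has already been isolated as \cref{lem:Appendix2}: without the bound on the size of $Z_{d_2}$, the generic Muller-to-parity translation could blow the arena up exponentially in $d_2$, so the crux is that the Muller condition arising from a 2-D parity objective whose first component is a B\"uchi condition has a very flat and thin Zielonka tree. Everything else here is bookkeeping — verifying that the product arena has exactly the stated number of vertices, that the priority ranges are as claimed, and that $\Fc$ faithfully encodes the 2-D parity condition — so the proof of this lemma is essentially an assembly of \cref{lem:inclusion-to-sim,lem:Muller to Parity reduction,lem:Appendix2} together with the parity-to-B\"uchi conversion.
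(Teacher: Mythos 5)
Your proposal matches the paper's argument essentially step for step: parity-to-B\"uchi conversion of $\Ac$, reduction of containment to $Sim(\Bc,\Ac')$ via \cref{lem:inclusion-to-sim}, viewing the simulation game as a 2-D parity game and re-encoding it as a Muller game over $[1,2]\times[d_2]$, and then invoking \cref{lem:Appendix2} together with \cref{lem:Muller to Parity reduction} to get the stated vertex count and $d_2+1$ priorities. You have also correctly identified that the bound on the Zielonka tree is the only non-routine ingredient, so nothing further is needed.
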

Since parity games can be solved in quasi-polynomial time\cite{CaludeJKLS22,JL17}, \cref{lem:lastlemma} implies that the problem of language containment in a history-deterministic automaton can be solved in quasi-polynomial time as well.

\begin{theorem}\label{thm:inclusion-theorem}
Given a nondeterministic parity automaton $\Ac$ with $n_1$ states and priorities in $[d_1]$, and a history-deterministic parity automaton $\Bc$ with $n_2$ states whose priorities are in $[d_2]$, checking whether the language of $\Ac$ is contained in the language of $\Bc$ can be done in time $$(n_1 \cdot d_1 \cdot n_2 \cdot d_2 \cdot |\Sigma|)^{\Oc({\log d_2})}.$$ 
\end{theorem}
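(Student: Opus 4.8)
\textbf{Proof proposal for \cref{thm:inclusion-theorem}.}
The plan is to chain together the reductions that have already been set up in this section. First I would recall that by \cref{lem:inclusion-to-sim}, checking $L(\Ac)\subseteq L(\Bc)$ is equivalent to deciding whether $\Bc$ simulates $\Ac$, i.e.\ whether Eve wins $Sim(\Bc,\Ac)$. The immediate obstacle to a quasi-polynomial bound is that $Sim(\Bc,\Ac)$ is a genuine $2$-D parity game, and such games are $\NP$-complete in general (\cref{thm:Sim-is-NP-hard}); the key observation that rescues us is that one of the two parity conditions here --- the one coming from $\Ac$ --- can be made into a B\"uchi condition, which keeps the resulting Muller condition small.

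Concretely, the first step is to convert $\Ac$ to a language-equivalent nondeterministic B\"uchi automaton $\Ac'$ with at most $n_1\cdot d_1$ states, using the standard index-reduction construction~\cite{Cho74,SN99}. Then $Sim(\Bc,\Ac')$ is a $2$-D parity game $\Gc$ on $n_1\cdot d_1\cdot n_2\cdot|\Sigma|$ vertices with priority functions $\chi_1:V\to[1,2]$ (from $\Ac'$) and $\chi_2:V\to[d_2]$ (from $\Bc$); by \cref{lem:inclusion-to-sim} it suffices to decide the winner of $\Gc$. The second step is to view $\Gc$ as a Muller game over the colour set $C=[1,2]\times[d_2]$, whose accepting family $\Fc$ consists of those $F\subseteq C$ for which $\max(F\!\restriction_1)$ even implies $\max(F\!\restriction_2)$ even. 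The third step is to bound the Zielonka tree $Z_{d_2}$ of this Muller condition: by \cref{lem:Appendix2} it has $\lceil d_2/2\rceil$ leaves and height $d_2$. Feeding this into \cref{lem:Muller to Parity reduction} yields, in polynomial time, an equivalent parity game $\Gc'$ with $n_1\cdot d_1\cdot n_2\cdot|\Sigma|\cdot\lceil d_2/2\rceil$ vertices and $d_2+1$ priorities --- this is exactly the content of \cref{lem:lastlemma}. Finally, I would invoke the quasi-polynomial parity-game algorithm~\cite{CaludeJKLS22,JL17}, which solves an $N$-vertex game with $k$ priorities in time $N^{\Oc(\log k)}$; substituting $N=n_1\cdot d_1\cdot n_2\cdot|\Sigma|\cdot\lceil d_2/2\rceil$ and $k=d_2+1$, and absorbing the $\lceil d_2/2\rceil$ factor into $d_2$ and the $\log(d_2+1)$ into $\log d_2$, gives the claimed bound $(n_1\cdot d_1\cdot n_2\cdot d_2\cdot|\Sigma|)^{\Oc(\log d_2)}$.

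The only genuinely nontrivial ingredient is \cref{lem:Appendix2}, the bound on the size of the Zielonka tree $Z_{d_2}$; everything else is assembling results already in the excerpt. The proof of that lemma is an induction on $d_2$: at the root (labelled $C=[1,2]\times[d_2]$) one removes colours with the top $\chi_2$-value to flip membership in $\Fc$, and one checks that each resulting child is, up to relabelling, the Zielonka tree of the same kind of condition with $d_2$ decreased by one or two, so that the height grows by one per level and the number of leaves grows by one every two levels. I would present this induction carefully in the appendix, since it is where the polynomial (rather than exponential) blow-up is actually won. With \cref{lem:Appendix2} and hence \cref{lem:lastlemma} in hand, \cref{thm:inclusion-theorem} follows immediately from the quasi-polynomial parity solver.
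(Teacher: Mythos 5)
Your proposal is correct and follows essentially the same route as the paper: reduce to the simulation game via \cref{lem:inclusion-to-sim}, convert $\Ac$ to a B\"uchi automaton, recast the resulting 2-D parity game as a Muller game, bound its Zielonka tree via \cref{lem:Appendix2}, apply \cref{lem:Muller to Parity reduction} to obtain the parity game of \cref{lem:lastlemma}, and solve it with a quasi-polynomial parity-game algorithm. Your sketch of the induction behind \cref{lem:Appendix2} also matches the paper's appendix argument.
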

\section{Discussion}

We have shown $\NP$-hardness for the problem of checking for simulation between two parity automata (when their parity indices are not fixed). We have also established upper and lower bounds of several decision problems relating to history-deterministic parity automata. The most significant amongst these, in our view, is the $\NP$-hardness for the problem of deciding if a given parity automaton is history-deterministic, which is an improvement from the previous lower bound of solving a parity game~\cite{KS15}.
 
There still remains a significant gap between the lower bound of $\NP$-hardness and the upper bound of $\EXPTIME$ for checking history-determinism, however. Furthermore, note that even if one shows the two-token conjecture~\cite{BK18,BKLS20b}, this would only imply a $\PSPACE$-upper bound (when the parity index is not fixed), since 2-token games can be seen as Emerson-Lei games~\cite{HD05}. Thus, a natural direction for future research is to try to show that the problem of checking for history-determinism is $\PSPACE$-hard.

On the other hand, however, it is also plausible that checking whether Eve wins the 2-token game of a given parity automaton can be done in $\NP$. A proof for this might show that if Eve wins a 2-token game, then she has a strategy that can be represented and verified polynomially. Such an approach, which would involve understanding the strategies for the players in the 2-token games better, could also yield crucial insights for proving or disproving the two-token conjecture (see \cref{sec:token-games}). 

Boker and Lehtinen showed in their recent survey that for a `natural' class of automata $T$, checking history-determinism for $T$-automata is at least as hard as solving $T$-games~\cite{BL23}. Interestingly, the problem of checking history-determinism over $T$-automata also has the matching upper bound of solving $T$-games for all classes of automata $T$ over finite words, and over infinite words with safety and reachability objectives on which the notion of history-determinism has been studied so far~\cite{BL22,GJLZ21,PT23,BHLST23,EGJLZ23}.
Our result of the problem of checking history-determinism being $\NP$-hard for parity automata deviates from this trend (unless parity games are $\NP$-hard, which would have the drastic and unlikely consequence of $\NP = \NP \cap \coNP$), and demonstrates the additional intricacy that parity conditions bring.

\subsubsection{Acknowledgements}
We thank Marcin Jurdzi\'nski, Neha Rino, K. S. Thejaswini, and anonymous reviewers for their feedback and suggesting numerous improvements to the paper. Additionally, we are grateful to K. S. Thejaswini for several insightful discussions and pointing out a flaw in an earlier proof of \cref{thm:Sim-is-NP-hard}. 
\bibliographystyle{splncs04}
\bibliography{gfg} 
\newpage
\appendix
\section{Appendix for \cref{sec:NP-hardnessforHD}}
\begin{lemma}\label{lemma:token-games-are-hard}
The following problems are $\NP$-hard:
\begin{enumerate}
    \item Given a parity automaton $\Ac$, does Eve win the $1$-token game of $\Ac$?
    \item Given a parity automaton $\Ac$, does Eve win the $2$-token game of $\Ac$?
\end{enumerate}
\end{lemma}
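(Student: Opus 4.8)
The plan is to reuse the automaton $\Hc$ built in the proof of \cref{lem:hd-is-np-hard} from a \speshial 2-D parity game $\Gc$, and to show that the chain of implications
\[
\Gc \text{ won by Eve} \;\Rightarrow\; \Hc \text{ is history-deterministic} \;\Rightarrow\; \text{Eve wins the 2-token game of } \Hc \;\Rightarrow\; \text{Eve wins the 1-token game of } \Hc \;\Rightarrow\; \Hc \text{ simulates } \Dc \;\Rightarrow\; \Gc \text{ won by Eve}
\]
closes up, so that all five statements are equivalent. Combined with \cref{lem:special-parity-is-np-hard}, this yields $\NP$-hardness for both the 1-token and the 2-token problem. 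Two of the arrows are essentially free: history-determinism implies Eve wins the $k$-token game for every $k$ (she plays her letter-game strategy and ignores Adam's tokens), which covers the first two arrows; and winning the 2-token game trivially implies winning the 1-token game, since the 1-token game is the 2-token game where Adam is forced to keep his two runs equal (or simply drops one). The last arrow, ``Eve wins the 1-token game of $\Hc$ $\Rightarrow$ $\Hc$ simulates $\Dc$'', is where the special structure is used: because $\Gc$ is \speshial we have $L(\Dc)=L(\Hc)=\Lc_1$, and $\Dc$ is deterministic, so a strategy of Eve in the 1-token game of $\Hc$ can be transported to $Sim(\Hc,\Dc)$ exactly as in the proof of \cref{lem:hd-is-np-hard}: when Adam plays a letter in $Sim(\Hc,\Dc)$, Eve internally feeds that letter to the 1-token game, letting Adam's single token follow the unique $\Dc$-transition (legitimate since $\Dc$ is a sub-automaton of $\Hc$), and copies back the transition her 1-token strategy produces. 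If Adam's run in $\Dc$ is accepting, then the word lies in $L(\Dc)=L(\Hc)$, so Adam's token (which runs inside the copy of $\Dc\subseteq\Hc$) is an accepting run of $\Hc$, hence Eve's 1-token strategy gives an accepting run of $\Hc$, and she wins $Sim(\Hc,\Dc)$.

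Concretely, I would first restate the construction of $\Hc$ and $\Dc$ and the facts $L(\Dc)=L(\Hc)=\Lc_1$, $\Lc_2\subseteq\Lc_1$, and ``$\Dc$ deterministic'' from \cref{sec:NP-hardnessforHD}, together with the already-proved equivalence of ``Eve wins $\Gc$'' and ``$\Hc$ simulates $\Dc$''. Then I would state the five-item equivalence as above and prove each arrow in turn, spending real words only on the two nontrivial ones. The ``HD $\Rightarrow$ $k$-token'' arrow needs a one-line argument that Eve's letter-game strategy is a legal 2-token strategy whose run is accepting whenever \emph{some} word she is handed is accepted — here we must note she wins regardless of Adam's token runs, which is immediate since the letter-game winning condition already guarantees an accepting run whenever the word is in $L(\Hc)$, and if one of Adam's token runs is accepting the word is necessarily in $L(\Hc)$. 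The ``1-token $\Rightarrow$ simulation'' arrow is the transport argument of the previous paragraph, which I would write out carefully.

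The step I expect to be the main obstacle is verifying that the transport of a 1-token strategy into $Sim(\Hc,\Dc)$ is well-defined and that the key implication really goes through: one must check that Adam is actually able to make his $\Dc$-token follow the unique $\Dc$-transition inside $\Hc$ (true, since all transitions of $\Dc$ are transitions of $\Hc$ by construction), that the word Adam builds in $Sim(\Hc,\Dc)$ is exactly the word Eve sees in the simulated 1-token game, and — crucially — that ``Adam's $\Dc$-run in $Sim(\Hc,\Dc)$ accepting'' forces ``the 1-token run (i.e. Adam's token in the simulated $\Hc$-game) is accepting''. This last point is exactly where determinism of $\Dc$ and $L(\Dc)=L(\Hc)$ are both needed: determinism makes Adam's $\Dc$-run the \emph{unique} $\Dc$-run on that word, so it being accepting means the word is in $L(\Dc)=L(\Hc)$; and then, since that same transition sequence lies inside the copy of $\Dc$ in $\Hc$, it is also a run of $\Hc$, so Adam's token run in $\Hc$ is accepting and Eve's 1-token strategy must answer with an accepting $\Hc$-run, which is precisely her winning $Sim(\Hc,\Dc)$. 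Everything else is bookkeeping along the lines already established in \cref{sec:NP-hardnessforsimulation,sec:NP-hardnessforHD}.
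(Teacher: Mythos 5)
Your chain breaks at the arrow ``Eve wins the $1$-token game of $\Hc$ $\Rightarrow$ $\Hc$ simulates $\Dc$'', and since this is the only arrow that closes the cycle, the gap is fatal. In the $1$-token game of $\Hc$, Adam's token starts at the initial state $\iota_H$ of $\Hc$, whereas the copy of $\Dc$ sitting inside $\Hc$ is rooted at $\iota_D$ and is reachable from $\iota_H$ only via the punishing transitions $(v_H,f)\xrightarrow{f'}w'_D$ with $f'\neq f$. Consequently the unique run of $\Dc$ on a word is a path in $\Hc$ but \emph{not} a run of $\Hc$, and ``letting Adam's single token follow the unique $\Dc$-transition'' is not a legal move of the virtual $1$-token game: already in the first round the $\Dc$-transition $\iota_D\xrightarrow{e}v_{\$}$ is not available from $\iota_H$. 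The natural repair---have the virtual Adam token trace \emph{some} accepting run of $\Hc$ on $w$ whenever $w\in L(\Hc)=L(\Dc)$---requires resolving the $\$$-nondeterminism of $\Hc$ on the fly, which is precisely history-determinism and hence circular; and if Eve instead lets the virtual token copy her own, the $1$-token winning condition becomes vacuous and yields nothing in $Sim(\Hc,\Dc)$. You flagged this step as the main obstacle but dismissed it on the grounds that all transitions of $\Dc$ are transitions of $\Hc$; that is true but insufficient, because the initial states differ.

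The paper closes the cycle in the contrapositive direction and with a different mechanism: if $\Hc$ is not history-deterministic (equivalently, by \cref{lem:hd-is-np-hard}, if $\Hc$ does not simulate $\Dc$), then \emph{Adam} wins the $1$-token game, and hence also the $2$-token game by mirroring Eve's transitions on his second token. Adam fixes a winning strategy $\sigma_{\adam}$ in $Sim(\Hc,\Dc)$ and uses it to choose letters; on his token he copies Eve's transitions until she must resolve the nondeterminism at some $v_H$ on $\$$, say by moving to $(v_H,f)$; he then moves his token to $(v_H,f')$ for some $f'\neq f$, and since $\sigma_{\adam}$ forces him to replay $f$ as the next letter, his token jumps into the $\Dc$-copy at $w_D$, after which it deterministically coincides with his accepting run of $\Dc$ while Eve's run is rejecting. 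This deviate-on-$\$$-then-get-punished-into-$\Dc$ idea is what your proposal is missing; the remaining arrows of your chain (HD $\Rightarrow$ win $2$-token $\Rightarrow$ win $1$-token, and simulation $\Leftrightarrow$ Eve wins $\Gc$) are correct and match the paper.
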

\begin{proof}
 As in the proof of \cref{lem:hd-is-np-hard}, consider a \speshial 2-D parity game $\Gc$, and let $\Hc$ and $\Dc$ be automaton constructed in \cref{sec:NP-hardnessforsimulation}, which is such that Eve wins $\Gc$ if and only if $\Hc$ simulates $\Ac$. From \cref{lem:hd-is-np-hard}, we also know that Eve wins $\Gc$ if and only if $\Hc$ is HD. We will show that the following holds: 
 \begin{enumerate}
     \item if $\Hc$ is history-deterministic, then Eve wins the 1-token game and the 2-token game of $\Hc$,
     \item if $\Hc$ is not history-deterministic, then Adam wins the 1-token game and the 2-token game of $\Hc$.
 \end{enumerate}
(1) is easy to see, since if $\Hc$ is history-deterministic, then Eve can use her letter game strategy to pick transitions on her token in the 1-token game or the 2-token game, ignoring Adam's tokens.

For (2), we show that Adam wins the 1-token game of $\Hc$ if $\Hc$ is not HD. This would imply that Adam wins the 2-token game of $\Hc$ as well, since he can use his strategy in 1-token game to win the 2-token game as follows: he copies Eve's transitions in his second token and follows his strategy in the 1-token game of $\Hc$ to choose the letters and transitions on his first token. 

Since Adam wins the letter game of $\Hc$, we know that $\Hc$ does not simulate $\Dc$. Fix a winning strategy $\sigma_{\adam}$ of Adam in the simulation game. We now describe how Adam can win the 1-token game of $\Hc$ by using $\sigma_{\adam}$. At a high level, Adam will exploit the nondeterminism on $\$$ to ensure his token eventually moves to $\Dc$ and traces out an accepting run, while picking letters according to $\sigma_{\adam}$ and ensuring Eve's run on her token is rejecting.

In more details, Adam in the 1-token game will keep a play of the game $Sim(\Hc,\Dc)$ in his memory in order to pick the letters and transitions in the 1-token game of $\Hc$. He picks letters in the 1-token game of $\Hc$ using $\sigma_{\adam}$, by viewing Eve's transitions in the 1-token game as transitions on $\Hc$ in $Sim(\Hc,\Dc)$. Note that since $\Dc$ is deterministic, Adam's transitions in $\Dc$ in $Sim(\Hc,\Dc)$ depends solely on Adam's choice of letters. For Adam's token in the 1-token game, he copies Eve's transitions till Eve hasn't had to resolve nondeterminism, i.e., there is a unique transition from Eve's state on Adam's letter in the corresponding round. Eventually, however, there must be a round where Eve would need to resolve nondeterminism: otherwise, since $L(\Dc) = L(\Hc)$, Eve would construct an accepting run if the word is accepting, and in particular, if Adam's run on his token is accepting.

Thus, let Eve's token be at the state $q$ when she needed to resolve the nondeterminism. Then, Adam's token is also at $q$, and by construction, we know that $q=v_H$ for some $v \in V_{\eve}$. In the corresponding simulation game $Sim(\Hc,\Dc)$ in Adam's memory, his state is $v_{\$}$. Now, Adam gives the letter $\$$, and his state in the simulation game is then updated to $v_D$, while Eve picks a transition $v_H \xrightarrow{\$:0} (v_H,f)$ where $f=(v,w)$ is an outgoing edge from $v$. Adam will then pick a transition $v_H \xrightarrow{\$:0} (v_H,f')$ for some edge $f' \neq f$ on his token. 

In the next round, Adam must pick the letter $f$ according to $\sigma_{\adam}$, or Adam's state and Eve's state in $Sim(\Hc,\Dc)$ would both be the same and in $\Dc$, causing Eve to win, which is a contradiction since  $\sigma_{\adam}$ is a winning strategy. This causes Adam's state in $\Dc$ in $Sim(\Hc,\Dc)$ to be $w_D$. In the $1$-token game, Eve's token goes to $w_H$ and Adam's token goes to $w_D$---same as his state in $Sim(\Hc,\Dc)$. From here, Adam can continue choosing letters according to $\sigma_{\adam}$, while his transitions are uniquely determined. Since $\sigma_{\adam}$ is a winning strategy, Eve's run on her token is rejecting, while Adam's run $\rho$ in $\Dc$ in $Sim(\Hc,\Dc)$ is accepting. Since the run of Adam's token eventually coincides with $\rho$, it is accepting as well, and hence Adam wins the $1$-token game.  \qed
\end{proof}
\section{Appendix for \cref{sec:lang-inclusion}}
\appendixtwo*
\begin{proof}
    We prove the lemma by induction on $d_2$. When $d_2$ is $0$ or $1$, the tree $Z_{d_2}$ is as shown below in \cref{figure:zielona-tree-base-case}, and the induction hypothesis is clearly satisfied.
\begin{figure}[H]
    \centering
    \begin{tikzpicture}
        \node (a) [blue] at (0,-1.5) {$\{1,2\} \times [0]$};      
        \node (c) [red] at (5,0) {$\{1,2\} \times [1]$};
        \node (d) [blue] at (3.8,-1.5) {$\{1,2\} \times [0]$};
        \node (e) [blue] at (6.2,-1.5) {$\{1\} \times [1]$};
        \graph{(c) -> {(d),(e)}};
    \end{tikzpicture}
      \caption{Zielonka trees $Z_0$ and $Z_1$}
    \label{figure:zielona-tree-base-case}
\end{figure}
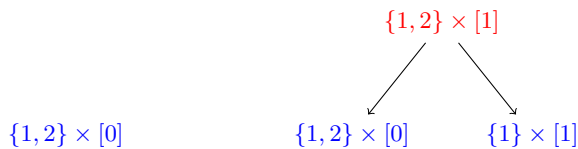

Let $d_2 \geq 2$, and suppose that the lemma holds for all $d' < d_2$. We distinguish between the cases of when $d_2$ is even or odd.

If $d_2 = 2k$ for some $k>0$, then the root of the node is labelled by the set $[1,2] \times [2k]$, which is in $\Fc$. Then, the only maximal set which is a proper subset of $[1,2] \times [2k]$ and not in $\Fc$ is $[1,2] \times [2k-1]$. Thus, the child $c$ of the root is labelled $[1,2] \times [2k-1]$ and $c$ is then the root of the tree $Z_{2k-1}$ itself (see \cref{figure:zielona-tree-induction-proof}). By induction hypothesis, the tree $Z_{2k-1}$ has height $2k-1$ and $k$ leaves, and hence, the tree $Z_{2k}$ has height $2k-1+1=2k$ and $k$ leaves, as desired.
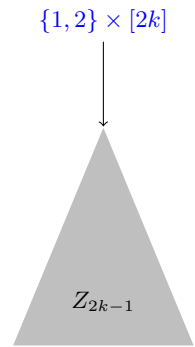
\begin{figure}
    \centering
    \begin{tikzpicture}
        \node (a) [blue] at (0,0) {$\{1,2\} \times [2k]$};
        \node  (b)  [fill=gray!50,inner sep=1pt,isosceles triangle, anchor = apex, shape border rotate = 90, minimum height = 1.5 cm, minimum width = 2.4 cm] at (0,-1.4) {$Z_{2k-1}$};
        \graph{(a)->(b)};
    \end{tikzpicture}
      \caption{Zielonka tree  $Z_{2k}$}
    \label{figure:zielona-tree-induction-proof}
\end{figure}

If $d_2 = 2k+1$ for some $k>0$, then the root of the node is labelled by the set $[1,2] \times [2k+1]$, which is not in $\Fc$. Now, there are two maximal sets that are proper subsets of $[1,2] \times [2k+1]$ and are in $\Fc$: the set $\{1\} \times [2k+1]$ which then has no proper subsets that are not in $\Fc$, and the set $\{1,2\} \times [2k]$. Accordingly, the root of $Z_{2k+1}$ has two children, one labelled by $\{1,2\} \times [2k]$ that is then the the root of the tree $Z_{2k}$, and another labelled $\{1\} \times [2k+1]$, as shown in \cref{figure:zielona-tree-induction-proof-Z2k+1}. By induction hypothesis, the tree $Z_{2k}$ has height $2k$ and $k$ leaves, and hence, the tree $Z_{2k+1}$ has height $2k+1$ and $k+1$ leaves, as desired.
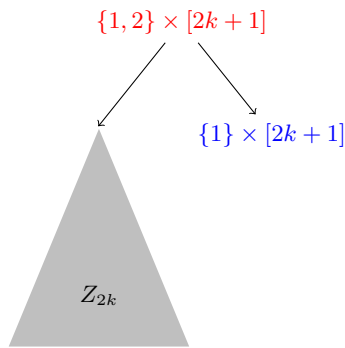
\begin{figure}
    \centering
    \begin{tikzpicture}
        \node (c) [red] at (6,0) {$\{1,2\} \times [2k+1]$};
        \node  (d)  [fill=gray!50,inner sep=1pt,isosceles triangle, anchor = apex, shape border rotate = 90, minimum height = 1.5 cm, minimum width = 2.4 cm] at (4.9,-1.4) {$Z_{2k}$};
        \node (d') at (4.8,-1.5) {};
        \node (e) [blue] at (7.2,-1.5) {$\{1\} \times [2k+1]$};
        \graph{(c) -> {(d'),(e)}};
    \end{tikzpicture}
      \caption{Zielonka tree $Z_{2k+1}$}
    \label{figure:zielona-tree-induction-proof-Z2k+1}
\end{figure}
\qed
\end{proof}
\end{document}